



\documentclass[nonacm]{aamas} 

\usepackage{graphicx}  
\usepackage{subcaption}
\usepackage{amsmath}
\usepackage{framed} 

\usepackage{amssymb}
\usepackage{balance} 



\setcopyright{ifaamas}
\acmConference[AAMAS '23]{Proc.\@ of the 22nd International Conference
on Autonomous Agents and Multiagent Systems (AAMAS 2023)}{May 29 -- June 2, 2023}
{London, United Kingdom}{A.~Ricci, W.~Yeoh, N.~Agmon, B.~An (eds.)}
\copyrightyear{2023}
\acmYear{2023}
\acmDOI{}
\acmPrice{}
\acmISBN{}



\acmSubmissionID{384}


\title[AAMAS-2023 Formatting Instructions]{A Redistribution Framework for Diffusion Auctions}


\author{Sizhe Gu}
\affiliation{
  \institution{ShanghaiTech University}
  \city{Shanghai}
  \country{China}}
\email{guszh@shanghaitech.edu.cn}

\author{Yao Zhang}
\affiliation{
  \institution{ShanghaiTech University}
  \city{Shanghai}
  \country{China}}
\email{zhangyao1@shanghaitech.edu.cn}

\author{Yida Zhao}
\affiliation{
  \institution{ShanghaiTech University}
  \city{Shanghai}
  \country{China}}
\email{zhaoyd1@shanghaitech.edu.cn}

\author{Dengji Zhao}
\affiliation{
  \institution{ShanghaiTech University}
  \city{Shanghai}
  \country{China}}
\email{zhaodj@shanghaitech.edu.cn}


\begin{abstract}
Redistribution mechanism design aims to redistribute the revenue collected by a truthful auction back to its participants without affecting the truthfulness. We study redistribution mechanisms for diffusion auctions, which is a new trend in mechanism design~\cite{zhao22mechanism}. The key property of a diffusion auction is that the existing participants are incentivized to invite new participants to join the auctions. Hence, when we design redistributions, we also need to maintain this incentive. Existing redistribution mechanisms in the traditional setting are targeted at modifying the payment design of a truthful mechanism, such as the Vickrey auction. In this paper, we do not focus on one specific mechanism. Instead, we propose a general framework to redistribute the revenue back for all truthful diffusion auctions for selling a single item. The framework treats the original truthful diffusion auction as a black box, and it does not affect its truthfulness. The framework can also distribute back almost all the revenue.

\end{abstract}



\keywords{Mechanism design; Redistribution framework; Social networks.}


         
\newcommand{\BibTeX}{\rm B\kern-.05em{\sc i\kern-.025em b}\kern-.08em\TeX}


\begin{document}


\pagestyle{fancy}
\fancyhead{}


\maketitle 


\section{Introduction}
We focus on a resource allocation problem that involves a group of self-interested agents competing for the resources. One important goal of the allocation problem is to maximize social welfare, and a common method is to hold an auction so that the agents with the highest valuations of the resources can be found. The Vickrey-Clarke-Groves (VCG) mechanism~\cite{vickrey1961counterspeculation,clarke1971multipart,groves1973incentives} is a well-known method under which the agents will truthfully report their valuations and the resources will be allocated to the agents with the highest valuations. This will also lead to a high revenue for the seller. However, in many situations, we do not seek for a profit. Hence, there is another body of studies on how to return the revenue to the participants~\cite{DBLP:conf/atal/Cavallo06,guo2009worst,DBLP:conf/aaai/Guo11,guo2016competitive,guo2011vcg,moulin2009almost,manisha2018learning}, which is called redistribution mechanism design. 

In recent years, researchers have started to design mechanisms on social networks~\cite{jackson2010social,DBLP:books/daglib/0025903}, where the goal is to incentivize agents to invite new agents via their social connections to join in~\cite{DBLP:conf/aaai/LiHZZ17,DBLP:conf/atal/ZhaoLXHJ18,DBLP:conf/ijcai/LiHZY19,zhang2020incentivize,DBLP:conf/aaai/KawasakiBTTY20,zhao2021mechanism}. By doing so, the mechanism can further improve social welfare or revenue in, for example, auctions. 
Therefore, introducing social network to redistribution problems is a good choice to break through the efficiency limitations of traditional settings.
There are many realistic scenarios of redistribution on social networks. For example, consider a non-profit organization like a government who has some idle resources, such as properties confiscated from criminals. These resources are normally destroyed and sold cheaply to a small group of people, as the organization does not want to gain much profit from clearing the resources. By utilizing  social networks, we could attract more buyers who are willing to pay more to receive the idle resources, but the profit could still be redistributed back to participants. 

Redistribution on networks is also more challenging than traditional settings because the action space of the participants is enlarged. Therefore, we are not able to directly apply the existing solutions such as Cavallo's mechanism~\cite{DBLP:conf/atal/Cavallo06}. The intuition behind this failure is that agents will have no incentives to invite their competitors in an auction, or invite others to share a limited redistribution. Another difficulty is that agents' valuations not only determine the winner of the auction, but also relate to how much revenue can be redistributed. Therefore, the redistribution problem is not about a simple combination of a truthful auction mechanism and a truthful reward distribution mechanism, which may not produce a truthful redistribution mechanism. 

More importantly, there are a bunch of different diffusion mechanisms with different allocation and payment policies. 
It is complex and tedious to design the redistribution mechanism for each diffusion auction separately. Hence, we design the first general redistribution framework for all diffusion auctions. The framework can redistribute almost all the revenue of any diffusion auction back to all agents without affecting the properties of incentive compatibility and individual rationality. 
In fact, our framework can also be applied to all traditional auctions. In particular, when the input mechanism is VCG, the mechanism generated by our framework is Cavallo's mechanism. Therefore, our framework is a general solution for redistribution mechanism design with or without networks.


\section{Preliminaries}
\par We consider a setting where a sponsor $s$ wants to allocate a single item on a social network. Apart from the sponsor, the social network consists of $n$ agents denoted by $N = \{1,...,n\}$. Each agent $i \in N \cup \{ s \}$ has a private neighbour set $r_i \subseteq N \cup \{ s \} \backslash \{i\}$, which represents the agents with whom $i$ can communicate directly.
Furthermore, each agent $i \in N$ has a private valuation for the item of $v_i \geq 0$. 
Initially, the sponsor can only invite her neighbours to participate in the allocation. In order to attract more participants, the sponsor asks participants who have already joined the allocation to further invite their neighbours to join. However, participants are competitors, so they would not invite each other by default. Thus, we need to design mechanisms to incentivize them to invite each other, which are called diffusion mechanisms. 

Formally, the diffusion mechanism requires each agent to report not only her valuation but also her neighbour set (which is equivalent to inviting her neighbours). Let $\theta_i = (v_i, r_i)$ be the type of an agent $i\in N$ and $\theta_i' = (v_i', r_i')$ be the reported type of $i$, with $r_i' \subseteq r_i$ representing the actual neighbour set she has invited, which can only be a subset of her true neighbour set $r_i$. Let $\theta' = (\theta'_1, \dots , \theta'_n) = (\theta'_i, \theta'_{-i})$ be the overall report profile, and $\theta'_{-i}$ is the overall report profile except for $i$. Let $\Theta_i$ represent the type space of agent $i$ and $\Theta = (\Theta_1, \dots, \Theta_n)$ represent the type profile space of all agents. Given any report profile $\theta'$, it induces a directed graph denoted by $G(\theta') = (V(\theta'), E(\theta')) $, where $V(\theta') = N \cup \{s\}$ and $E(\theta')= \{(s,j)|j\in r_s\} \cup \{(i,j)| i\in N, j\in r_i'\}$. Let $D_s(G(\theta'))$ denote the set of agents accessible from $s$ in $G(\theta')$. Because the others cannot receive the proper invitation started by the sponsor, only the agents in $D_s(G(\theta'))$ can actually join the allocation (in practice, this means that the others will not be informed about the allocation at all). An example of the induced graph $G(\theta')$ is shown in Figure~\ref{fig:induced}.

The goal of the sponsor is to allocate the item to the agent with the highest valuation, but she does not want to gain any profit from the allocation. Hence, the sponsor need to redistribute the revenue from a given diffusion auction mechanism. 

\begin{definition}\label{def:mechanism}

A diffusion auction mechanism $\mathcal{M}$ is defined by an allocation policy $\pi = \{\pi_1,..,\pi_n\}$ and a payment policy $x = \{x_1,...,x_n\}$, where $\pi_i: \Theta \rightarrow \{0,1\}$ and $x_i : \Theta \rightarrow \mathbb{R}$ are the allocation and payment for $i$ respectively. Additionally, for all report profiles $\theta'\in \Theta$,

\begin{itemize}
    \item for any agent $i\not\in D_s(G(\theta'))$, $\pi_i(\theta') = 0$ and $x_i(\theta') = 0$;
    \item for any agent $i\in D_s(G(\theta'))$, $\pi_i(\theta')$ and $x_i(\theta')$ are independent of the reports of agents who are not in $D_s(G(\theta'))$;
    \item $\sum_{i \in N}\pi_i(\theta') \leq 1$. 
\end{itemize}
\end{definition}

\begin{figure}[htbp]
    \centering  
	\begin{subfigure}[b]{0.275\textwidth}
        \includegraphics[width=\textwidth]{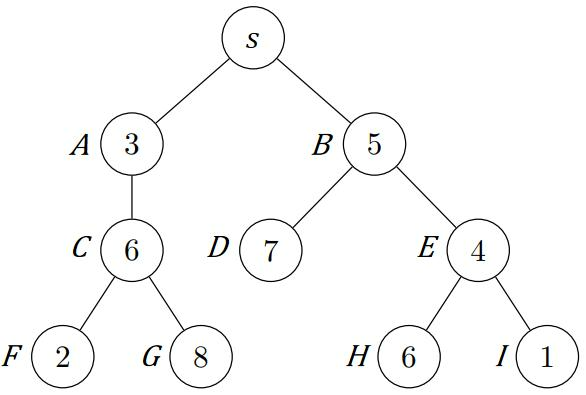}
        \caption*{(1)}
        \label{fig:2.11}
    \end{subfigure}
    \begin{subfigure}[b]{0.175\textwidth}
        \includegraphics[width=\textwidth]{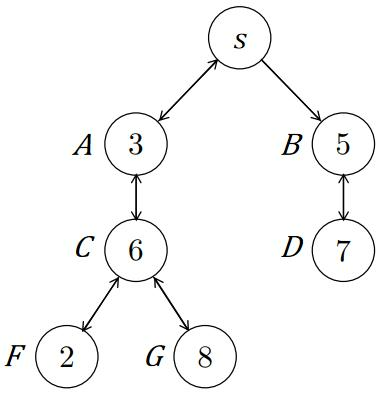}
        \caption*{(2)}
        \label{fig:2.12}
    \end{subfigure} 
    \caption{An example of induced graphs by agents' type profile and report profile. The number in each node is the real/reported valuation of the agent. (1) is the graph induced by agents' real types. (2) is the graph induced by agents' report profile where agent $B$ misreports $\theta_B' = (5, \{D\})$ (only agents in $D_s(G(\theta'))$ are shown). Then, agents $E$, $H$ and $I$ cannot participate in the mechanism.}
    \label{fig:induced}
\end{figure}

Given the agents’ report profile $\theta'$, $\pi_i(\theta') = 1$ means that the item is allocated to agent $i$, while $\pi_i(\theta') = 0$ means that 
$i$ does not get the item. In addition, $x_i(\theta') \geq 0$ means that 
$i$ pays $x_i(\theta')$ to the sponsor, and $x_i(\theta') < 0$ indicates that 
$i$ receives $|x_i(\theta')|$ from the sponsor. Therefore, the surplus of the payment transfers in the mechanism (i.e., the revenue) is defined by
$$ S(\theta') = \sum_{i \in N}x_i(\theta').$$
\par Then in our setting, given the type $\theta_i = (v_i , r_i)$ of an agent $i$ and the report profile $\theta'$, the utility of agent $i$ under the diffusion auction mechanism $\mathcal{M} = (\pi, x)$ is
$$ u_i(\theta_i, \theta') = \pi_i(\theta') \cdot v_i - x_i(\theta') .$$

In the following, we will define several properties that are required for the diffusion auction mechanism. First, an agent should not suffer a loss if she reports her true valuation on the item. 

\begin{definition}
    A diffusion auction mechanism $\mathcal{M}=(\pi, x)$ is \textbf{individually rational} (IR) if for all $i \in N$ and all $\theta' \in \Theta$, we have $$u_i(\theta_i,((v_i,r_i'),\theta_{-i}'))\geq 0.$$
\end{definition}

Next, we want to incentivize all agents not only to report their true valuations but also to invite all their neighbours to join the mechanism, i.e., reporting their true types is a dominant strategy.

\begin{definition}
    A diffusion auction mechanism $\mathcal{M}=(\pi, x)$ is \textbf{incentive compatible} (IC) if for all $i\in N$ with $\theta_i\in \Theta_i$, and all $\theta_i'\in \Theta_i$, $\theta_{-i}'\in \Theta_{-i}$, we have $$u_i(\theta_i, (\theta_i,\theta_{-i}')) \geq u_i(\theta_i, (\theta_i',\theta_{-i}')).$$
\end{definition}

\par The sum of the payment transfers should be non-negative for the sponsor $s$; otherwise, she will pay a deficit.

\begin{definition}
    A diffusion auction mechanism $\mathcal{M}=(\pi, x)$ is \textbf{non-deficit} (ND) if for all $i \in N$ and all $\theta' \in \Theta$, we have
    $$S(\theta') = \sum_{i \in N} x_i(\theta') \geq 0.$$
\end{definition}
\par Finally, recall that our sponsor does not want to gain any profit. 
Our goal is to establish a framework that can create a diffusion redistribution mechanism from a given diffusion auction mechanism $\mathcal{M}^a=(\pi^a, x^a)$. 
Given the report profile $\theta' \in \Theta$, denote the revenue achieved by $\mathcal{M}^a$ as $S^a(\theta') = \sum_{i\in N} x^a_i(\theta')$. Then our framework is to decide the amount $R_i$ that is returned to each agent $i$, i.e., to decide a redistribution mechanism $\mathcal{M}=(\pi, x)$ with $\pi_i(\theta') = \pi^a_i(\theta')$ and $x_i(\theta') = x^a_i(\theta') - R_i(\theta')$. A diffusion redistribution mechanism, speaking formally, is also a diffusion auction mechanism, but 
we want it to redistribute the revenue back to agents as much as possible, i.e., the sum of the payment transfers should be close to 0, which is define as follows.



\begin{definition}
    A diffusion redistribution mechanism $\mathcal{M}=(\pi, x)$ is \textbf{asymptotically budget-balanced} (ABB) if for all $\theta' \in \Theta$, 
    $$\lim_{n \to \infty} S(\theta') = 0.$$
\end{definition}
It means that the remaining revenue that has not been redistributed is approaching zero if the number of agents in the underlying network is large enough. We also consider an approximation to ABB when it is hard to achieve.

\begin{definition}
    A diffusion redistribution mechanism $\mathcal{M}=(\pi, x)$ is \textbf{$\epsilon$-asymptotically budget-balanced} ($\epsilon$-ABB) if for all $\theta' \in \Theta$,
    $$\lim_{n \to \infty} S(\theta') \leq \epsilon$$ where $\epsilon >0$ is  a constant.
\end{definition}

 
Before we introduce our framework, we show that directly applying classic Cavallo's mechanism fails to satisfy our properties. We recall Cavallo's mechanism as follows.

\begin{framed} 
\noindent\textbf{Cavallo's Mechanism}
\begin{enumerate}
    \item Given $G(\theta')$, the mechanism chooses the winner $w \in \arg\max_{i\in N} v_i'$, and set $\pi_w(\theta') =1$. 
    \item For each agent $i$, her payment is $x_i(\theta') = x_i^{VCG}(\theta') - x_{w'}^{VCG}(\theta_{-i}')/n$, where $x_i^{VCG}(\theta')$ is $i$'s payment under VCG and $w'$ is the winner when we run VCG without agent $i$.
\end{enumerate}
\end{framed}

\begin{proposition}
Cavallo's mechanism is not an incentive compatible diffusion redistribution mechanism.
\end{proposition}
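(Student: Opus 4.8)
The plan is to exhibit a specific counterexample, namely a small network instance together with a profitable misreport, that breaks incentive compatibility for Cavallo's mechanism when it is run on a diffusion graph. The reason I expect this to work is that Cavallo's mechanism was designed for the traditional setting where the set of participants is fixed, whereas in the diffusion setting an agent's report $r_i'$ controls \emph{which} other agents are even able to join. The crucial observation is step~2 of the mechanism: the rebate subtracted from each agent $i$ is $x_{w'}^{VCG}(\theta_{-i}')/n$, which depends both on the VCG payment of the ``second-place'' winner $w'$ when $i$ is removed and on $n$, the number of participating agents. Since an agent can shrink $D_s(G(\theta'))$ by withholding some of her neighbours, she can manipulate both of these quantities, and hence her own rebate, without changing her chance of winning the item.

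Concretely, I would first recall that the classical Cavallo rebate is individually designed so that agent $i$'s total payment never depends on her own report in the fixed-population case, which is exactly what yields IC there. I would then construct an instance in which some non-winning agent $i$ has a neighbour $j$ whose presence \emph{increases} the externality term $x_{w'}^{VCG}(\theta_{-i}')$ that is returned to other agents but does not help (or even hurts) $i$'s own rebate, or alternatively an instance in which agent $i$ can, by dropping a neighbour, decrease $n$ and thereby increase the per-capita rebate $x_{w'}^{VCG}(\theta_{-i}')/n$ she receives. The figure's network (with the sponsor $s$, agents $A,B,\dots$ and the possibility of $B$ hiding $E,H,I$ by reporting $\theta_B' = (5,\{D\})$) already suggests the shape of such an example: by cutting off a subtree, an agent changes the VCG quantities feeding into the redistribution formula.

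The key steps, in order, would be: (i) fix a concrete graph and valuation profile; (ii) compute each agent's VCG payment $x_i^{VCG}$ and the removed-agent winner payment $x_{w'}^{VCG}(\theta_{-i}')$ under truthful reporting, then form the Cavallo payment $x_i$ and the resulting utility $u_i$; (iii) let the manipulating agent $i$ misreport $\theta_i'$ (most cleanly, by hiding one or more neighbours so as to change $D_s(G(\theta'))$ and hence $n$ and the externality terms); (iv) recompute $u_i$ under the misreport and verify that it strictly exceeds the truthful utility, contradicting the IC inequality $u_i(\theta_i,(\theta_i,\theta_{-i}')) \geq u_i(\theta_i,(\theta_i',\theta_{-i}'))$. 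Because a single counterexample suffices to refute a universally-quantified property, I do not need a general argument, only one carefully chosen instance.

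The main obstacle is choosing the instance so that the manipulation is \emph{clean}: I must ensure the misreport does not accidentally change who wins the item (which could confound the utility comparison via the allocation term $\pi_i \cdot v_i$), and that every payment remains well-defined for a diffusion-adapted VCG. In particular I need to be careful about how $x_{w'}^{VCG}(\theta_{-i}')$ is interpreted when removing agent $i$ also disconnects part of the graph, and about whether the $1/n$ factor uses the reported or the true population — since the whole leverage of the counterexample comes from the agent's ability to move $n$ and the externality term through her invitation choices, the example must make that dependence unambiguous and the net utility change strictly positive.
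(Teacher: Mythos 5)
Your proposal is correct and follows essentially the same route as the paper: the paper's proof is exactly the counterexample you describe in your ``alternatively'' branch, where a winning agent ($C$, with value $4$) profitably withholds an invitation to a low-value neighbour $D$ so that $n$ stays at $3$ rather than $4$ and her per-capita rebate $x_{w'}^{VCG}(\theta_{-C}')/n$ rises from $1/2$ to $2/3$, with the allocation and all VCG payments unchanged. The only thing missing from your write-up is the concrete instance itself, but the lever you identify (the $1/n$ factor combined with the agent's control over $D_s(G(\theta'))$) is precisely the one the paper exploits.
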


\begin{proof}

\par We prove this by giving an example in Figure~\ref{fig:antiCavallo}. 
First for the situation without diffusion shown in Figure~\ref{fig:antiCavallo}(1), by Cavallo's mechanism, agent $C$ wins the item and $\pi_C(\theta')\cdot v_C= 4$, $x_C^{VCG}(\theta') = 3$. If agent $C$ is removed, $B$ will win the item, so $x_{w'}^{VCG}(\theta_{-C}') = 2$. Similarly we can get $x_{w'}^{VCG}(\theta_{-A}') = 3$ and $x_{w'}^{VCG}(\theta_{-B}') = 2$. 
Therefore, agent $A$, $B$ and $C$ will receive $1$, $\frac{2}{3}$, and $\frac{2}{3}$ as redistribution respectively. However, in Figure~\ref{fig:antiCavallo}(2), when agent $C$ invites agent $D$ and we still apply Cavallo's mechanism, $C$ is still the winner and all VCG payments will not change. Then, agent $C$ can only get $\frac{2}{4} = \frac{1}{2}$ (since $n$ becomes to 4) as redistribution. The payments for getting the item are the same, but the revenue from redistribution decreases after inviting $D$ so agent $C$ has no motivation to invite $D$. 

Hence, Cavallo's mechanism is not IC with diffusion.
\end{proof}

\begin{figure}[htbp]
    \centering  
	\begin{subfigure}[b]{0.18\textwidth}
        \includegraphics[width=\textwidth]{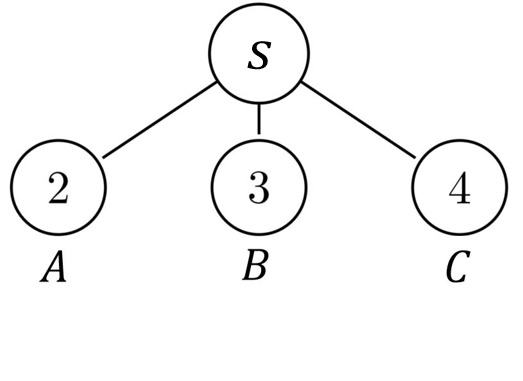}
        \caption*{(1)}
        \label{fig:2.11}
    \end{subfigure}
     \begin{subfigure}[b]{0.045\textwidth}
        \includegraphics[width=\textwidth]{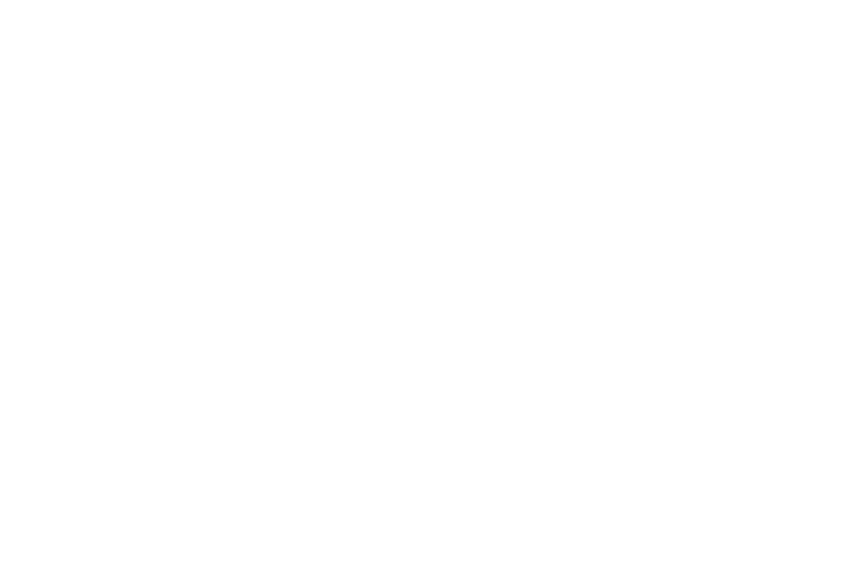}
        \caption*{}
    \end{subfigure}
    \begin{subfigure}[b]{0.18\textwidth}
        \includegraphics[width=\textwidth]{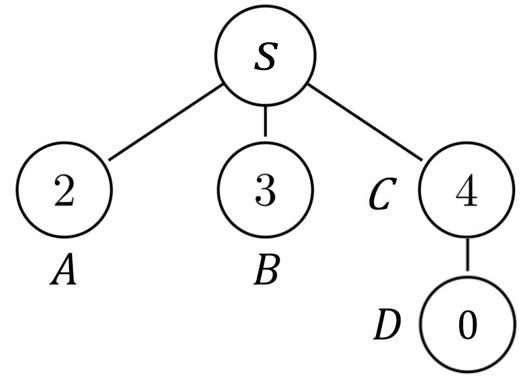}
        \caption*{(2)}
        \label{fig:2.12}
        \end{subfigure}
    \caption{An example of Cavallo’s mechanism being applied in social network. (1) is a graph of traditional setting of redistribution problem. (2) is a graph induced by agents’ real types on social networks. (1) can also be viewed as the result that agent $C$ does not invite agent $D$ in (2). }
    \label{fig:antiCavallo}
\end{figure}

\section {Redistribution Framework for Diffusion Auctions}
\par In this section, we propose a network-based redistribution mechanism framework (NRMF) for all IC diffusion auction mechanisms. 
We first consider a subproblem to redistribute a value to all agents without affect the IC property.

\subsection{Proportional Reward Sharing in a Tree}
\par We consider a problem that a sponsor $s$ wants to share a reward of $B \geq 0$ with a set of agents $N$ connected as a tree rooted by $s$. 

For a tree $T(N\cup \{s\},E)$ rooted by $s$, we introduce the following two notations.

\begin{itemize}
    \item Let $p_i \in N\cup \{s\}$ be the parent agent of $i\in N$ in $T$. 
    \item Let $C_i$ be the set of all agents in the subtree rooted by $i$ in $T$ (excluding $i$). 
    
\end{itemize}

We define the following distribution mechanism.

\begin{framed}
 \noindent\textbf{Proportional Reward Sharing in a Tree (PRST)}
 
 \noindent\rule{\textwidth}{0.5pt}
 
 \noindent\textsc{Input}: a tree $T=(N\cup \{s\},E)$ and a reward $\mathcal{B}$.
 
 \noindent\rule{\textwidth}{0.5pt}
 
 \begin{enumerate}
    \item Set $\Omega_{s} = 1$.
    \item For each agent $i\in N$, let
    $$ \mathsf{total} = \frac{|C_i| + 1}{|C_{p_i}|}, \quad \mathsf{base} = \frac{1}{|C_{p_i}| - |C_i|} $$
    and then recursively define
    
    $$\left\{
    \begin{array}{l}
    \omega_{i} = \Omega_{p_i} \cdot \left( \mathsf{base} + \left( \mathsf{total} - \mathsf{base} \right) \alpha \right)    \\
    \\
    \Omega_i = \Omega_{p_i} \cdot \mathsf{total} - \omega_i = \Omega_{p_i} \cdot \left( \mathsf{total} - \mathsf{base} \right) \cdot \left( 1 - \alpha \right)
    \end{array}\right.$$
    with predefined parameter $0<\alpha<1$.
    \item Set $b_i = \omega_i  \mathcal{B}$.
 \end{enumerate}
 
 \noindent\rule{\textwidth}{0.5pt}
 
 \noindent\textsc{Output}: the share $b_i$ for all agents $i\in N$.
\end{framed}


In the PRST, $\omega_i$ is the proportion of the reward allocated to agent $i$ and $\Omega_i$ is the proportion of the reward that agent $i$ gives to her descendants. The proportion $\omega_i$ has two parts.  The first part is the basic reward, which is determined by the number of descendants of her siblings ($|C_{p_i}|-|C_i|$ in $\mathsf{base}$). The second part is the reward for her diffusion, which increases proportionally to the number of her descendants ($|C_i|$). 
A running example of the PRST is given in Figure~\ref{fig:rho}.

\begin{figure}[htbp]
    \centering
    \includegraphics[width=0.48\textwidth]{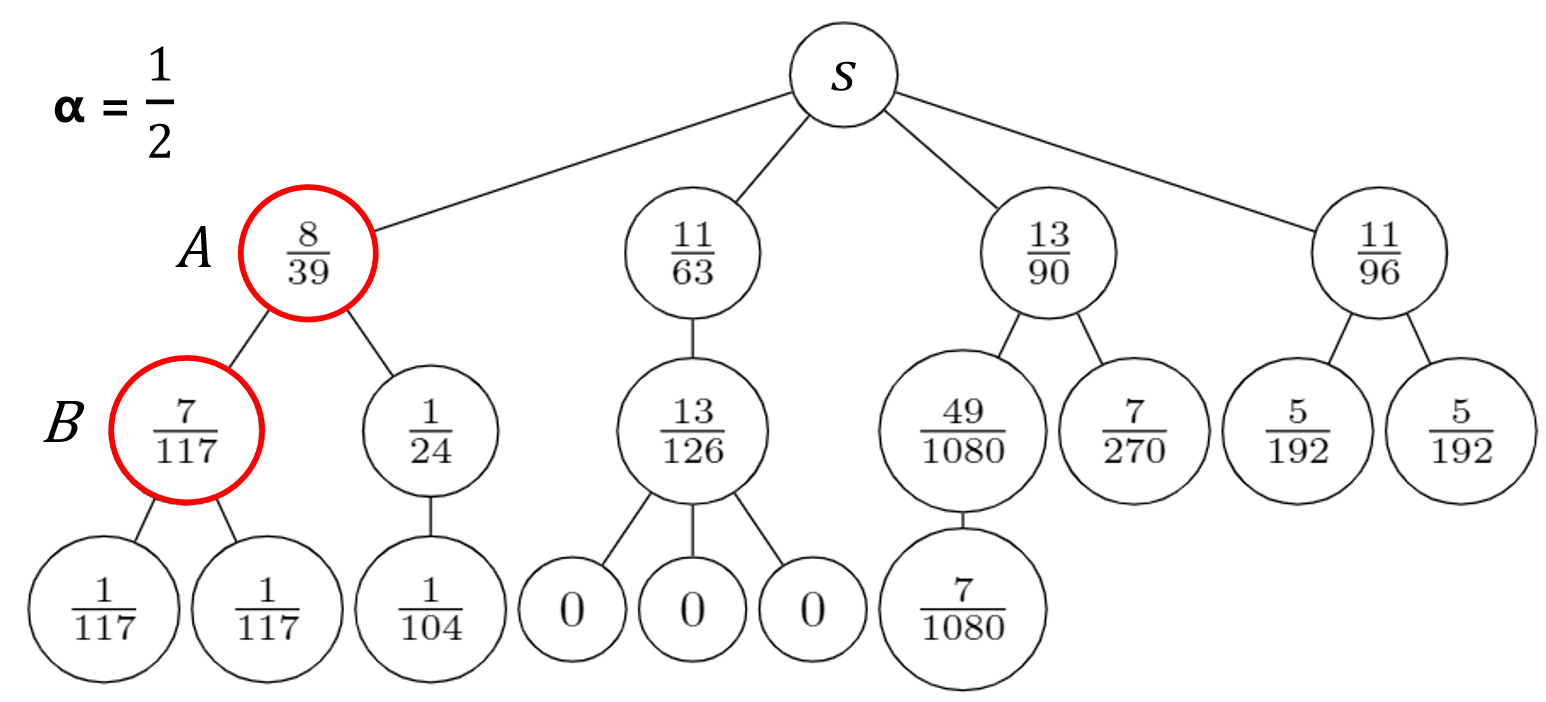}
    \caption{An example of PRST. Here, we set $\mathcal{B} = 1$ and $\alpha = 1/2$. Each node represents an agent and the value in the node is the share $b_i$ of the agent. 
    }
    \label{fig:rho}
\end{figure}

\par In Figure~\ref{fig:rho}, considering agent $A$, $|C_{p_A}|$ means the number of agents in the whole tree excluding $s$, which is 18. $|C_A|$ means the number of descendants of agent $A$, which is 5. According to the definition, $\Omega_{p_A} = \Omega_s = 1$. Therefore,
\begin{align*}
    \omega_A 
    & = \frac{1}{|C_{p_A}|-|C_A|} + \left( \frac{|C_A|+1}{|C_{p_A}|} - \frac{1}{|C_{p_A}|-|C_A|} \right) \cdot \alpha \\
    &= \frac{1}{18-5} + \left( \frac{5+1}{18} - \frac{1}{18-5} \right) \cdot \frac{1}{2}=\frac{8}{39}
\end{align*}
Similarly, considering agent $B$, $|C_{p_B}|$ means the number of descendants of agent $A$, which is 5. $|C_B|$ means the number of descendants of agent $B$, which is 2. Hence, we have 
$\Omega_{p_B} = \Omega_A=\frac{|C_A|+1}{|C_{p_A}|} - \omega_A=\frac{6}{18}-\frac{8}{39} 
$. Therefore,
\begin{align*}
    \omega_B 
    &= \Omega_{p_B} \left[ \frac{1}{|C_{p_B}|-|C_B|} + \left(
        \frac{|C_B|+1}{|C_{p_B}|} - \frac{1}{|C_{p_B}|-|C_B|}\right) \alpha \right]\\
    &=\left( \frac{6}{18}-\frac{8}{39} \right) \cdot \left( \frac{1}{3} + \left( \frac{3}{5} - \frac{1}{3} \right) \cdot \frac{1}{2} \right)=\frac{7}{117}
\end{align*}

In the PRST, each agent can get a basic share when she joins the mechanism, and the base part only depends on the number of agents in the subtrees leading by her siblings. Then she can get a bigger share if she invites her neighbours to join in. The proportion of the share is determined by how many descendants she has compared to her siblings. Intuitively, we introduce a propagation competition among siblings, and the one with a larger propagation will get a larger share of the reward. We illustrate several useful properties of the procedure below.

\begin{lemma}\label{lem:ir}
In the PRST, $b_i\geq 0$ for all $i\in N$.
\end{lemma}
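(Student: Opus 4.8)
Since $b_i = \omega_i \mathcal{B}$ and the reward satisfies $\mathcal{B} \geq 0$ by assumption, it suffices to show $\omega_i \geq 0$ for every $i \in N$. The plan is to reduce everything to the sign of the two auxiliary quantities $\mathsf{base}$ and $\mathsf{total}$ and of the propagated factor $\Omega_{p_i}$, and then to establish those signs by one structural observation about the tree together with a short induction on depth.

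First I would record the single combinatorial fact on which everything rests: for every $i \in N$ its parent $p_i$ satisfies $|C_{p_i}| \geq |C_i| + 1$. This holds because $i$ itself together with every agent in $C_i$ is a descendant of $p_i$, while $i \notin C_i$, so $\{i\} \cup C_i$ is a disjoint union contained in $C_{p_i}$. In particular $|C_{p_i}| - |C_i| \geq 1 > 0$ and $|C_{p_i}| \geq 1 > 0$, so both $\mathsf{base} = 1/(|C_{p_i}| - |C_i|)$ and $\mathsf{total} = (|C_i| + 1)/|C_{p_i}|$ are well defined and strictly positive.

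Next I would prove the key inequality $\mathsf{total} \geq \mathsf{base}$. Writing $a = |C_i|$ and $b = |C_{p_i}|$ and clearing the (positive) denominators, this is equivalent to $(a+1)(b-a) \geq b$, which simplifies to $a(b - a - 1) \geq 0$. Since $a \geq 0$ and $b - a - 1 \geq 0$ by the fact above, the inequality holds. This is the only genuinely algebraic step, and I expect it to be the main obstacle; everything else is bookkeeping about signs.

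Finally I would combine these. Rewriting $\omega_i = \Omega_{p_i}\bigl((1-\alpha)\,\mathsf{base} + \alpha\,\mathsf{total}\bigr)$ and $\Omega_i = \Omega_{p_i}(\mathsf{total} - \mathsf{base})(1 - \alpha)$, both bracketed factors are nonnegative: the first because $\mathsf{base}, \mathsf{total} > 0$ and $0 < \alpha < 1$, the second because $\mathsf{total} \geq \mathsf{base}$ and $1 - \alpha > 0$. An induction on the depth of $i$ then yields $\Omega_j \geq 0$ for all $j \in N \cup \{s\}$: the base case is $\Omega_s = 1 \geq 0$, and the inductive step applies the expression for $\Omega_i$ together with $\Omega_{p_i} \geq 0$. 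With $\Omega_{p_i} \geq 0$ established, the expression for $\omega_i$ gives $\omega_i \geq 0$, and hence $b_i = \omega_i \mathcal{B} \geq 0$, completing the argument.
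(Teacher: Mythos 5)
Your proof is correct and follows essentially the same route as the paper's: reduce $b_i \geq 0$ to the nonnegativity of $\mathsf{base}$, $\mathsf{total}-\mathsf{base}$, and $\Omega_{p_i}$. In fact you are more complete than the paper, which merely asserts $\mathsf{total} \geq \mathsf{base}$ and handles $\Omega_{p_i}\geq 0$ with a single-step expansion rather than your explicit observation $|C_{p_i}| \geq |C_i|+1$ and induction on depth.
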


\begin{proof}
Considering $\omega_i$ for all $i\in N$,
$$
    \omega_i = \Omega_{p_i} \left[ \frac{1}{|C_{p_i}|-|C_i|} + \left(
        \frac{|C_i|+1}{|C_{p_i}|} - \frac{1}{|C_{p_i}|-|C_i|} \right) \alpha \right]
$$
Since
$$\frac{1}{|C_{p_i}|-|C_i|} \geq 0, 
    \frac{|C_i|+1}{|C_{p_i}|} - \frac{1}{|C_{p_i}|-|C_i|} \geq 0,
    \alpha \geq 0
$$
$$
   \Omega_{p_i}  = \left( \frac{|C_{p_i}|+1}{|C_{p_{p_i}}|} - \frac{1}{|C_{p_{p_i}}|-|C_{p_i}|} \right) \cdot (1-\alpha)\geq 0.
$$
which implies
$ b_i = \omega_i \mathcal{B} \geq 0 $.
\end{proof}

\begin{lemma}\label{lem:ic}
In the PRST, agents have no incentives to block any of her children's participation.
\end{lemma}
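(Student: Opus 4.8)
The plan is to show that blocking a child can only weakly decrease agent $i$'s share $b_i = \omega_i\mathcal{B}$, so that $i$ never strictly benefits from doing so. The central observation is that $\omega_i$ factors as a product along the path from the root to $i$, and that blocking a child shifts all the relevant subtree sizes by the same amount. First I would fix the path $s = t_0, t_1, \dots, t_m = i$ from the root to $i$ and write $a_\ell = |C_{t_\ell}|$. Unfolding the recursion $\Omega_{t_\ell} = \Omega_{t_{\ell-1}}(\mathsf{total}_{t_\ell} - \mathsf{base}_{t_\ell})(1-\alpha)$ from $\Omega_s = 1$, I would express
$$\omega_i = (1-\alpha)^{m-1}\Big(\prod_{\ell=1}^{m-1} g_\ell\Big) f_m,$$
where $g_\ell = \frac{a_\ell + 1}{a_{\ell-1}} - \frac{1}{a_{\ell-1}-a_\ell}$ and $f_m = (1-\alpha)\frac{1}{a_{m-1}-a_m} + \alpha\frac{a_m+1}{a_{m-1}}$. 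By the inequalities already established in the proof of Lemma~\ref{lem:ir}, every $g_\ell$ and $f_m$ is non-negative.

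Next I would describe the effect of blocking. If $i$ removes a child $j$, the whole subtree of $j$ disappears, i.e.\ $k := |C_j|+1$ agents leave the tree. Since these $k$ agents are descendants of $i$ and hence of every ancestor $t_\ell$, each subtree size decreases by exactly the same amount, $a_\ell \mapsto a_\ell - k$ for all $0 \le \ell \le m$. Crucially, each sibling-based difference $a_{\ell-1} - a_\ell$ is unchanged, because the removed agents lie entirely inside the subtree of $t_\ell$ for every $\ell$. Thus the constant terms $\frac{1}{a_{\ell-1}-a_\ell}$ in $g_\ell$ and $f_m$ are untouched, and the only change is the uniform substitution $a_\ell \mapsto a_\ell - t$ with $t$ running from $0$ to $k$.

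The monotonicity step is then elementary: for $a + 1 \le b$ the map $t \mapsto \frac{a+1-t}{b-t}$ has derivative $\frac{a+1-b}{(b-t)^2} \le 0$, so it is non-increasing. Applying this to $\frac{a_\ell+1}{a_{\ell-1}}$ inside each $g_\ell$ and to $\frac{a_m+1}{a_{m-1}}$ inside $f_m$ — where $a_\ell + 1 \le a_{\ell-1}$ holds because $t_\ell$ together with its $a_\ell$ descendants all lie in the subtree of $t_{\ell-1}$ — shows that each factor weakly decreases while staying non-negative as $t$ increases from $0$ to $k$. A product of non-negative, non-increasing factors is non-increasing, so $\omega_i$, and therefore $b_i = \omega_i\mathcal{B}$ with $\mathcal{B}\ge 0$, cannot increase when $i$ blocks $j$. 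Hence $i$ has no incentive to block any child.

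I expect the main obstacle to be the second step: recognising that blocking produces a \emph{uniform} shift of all ancestor subtree sizes while leaving every sibling difference fixed. This is what decouples the otherwise entangled change in $\Omega_{p_i}$ (which depends on the entire path above $i$) from the local factor $f_m$, and lets the product-of-monotone-factors argument go through. Once the product form and the uniform-shift structure are in hand, the remainder is the one-line calculus check above.
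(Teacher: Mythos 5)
Your proof is correct and is essentially the paper's argument in a different presentation: the paper proves $\Omega'_j \le \Omega_j$ for each ancestor $j$ by induction from the root using exactly your two key observations (blocking shifts every ancestor's subtree size $|C_j|$ by the same amount $\Delta$ while leaving each sibling difference $|C_{p_j}|-|C_j|$ fixed, and $\frac{a+1-t}{b-t}$ is non-increasing in $t$ when $a+1\le b$), and your explicit product $\omega_i = (1-\alpha)^{m-1}\bigl(\prod_\ell g_\ell\bigr) f_m$ is just that induction unrolled along the root-to-$i$ path. The only point worth making explicit in a final write-up is that the shifted factors $g'_\ell$ remain non-negative (which holds because the pruned tree is itself a valid tree), since the product-of-non-increasing-factors step needs non-negativity after the shift, not just before.
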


\begin{proof}
Since the reward $\mathcal{B}$ is fixed, then we only need to consider the coefficients $\omega_i$. 
Suppose an agent $i$'s coefficient becomes $\omega_i'$ when she blocks some of her children from participation. 
Denote $|C_i| - |C_i'|$ by $\Delta|C_i|$. Then for all agent $i$'s ancestor $j\in N$, denote $|C_{p_j}|-|C_j|$ by $|C_{-j}|$, which will not change and we have $|C_j| - |C_j'| = \Delta|C_j|$. Hence
\begin{align*}
    \Omega'_{j} & = \Omega'_{p_j} \left( \frac{|C'_j|+1}{|C'_{p_j}|} - \frac{1}{|C'_{-j}|} \right) ( 1 - \alpha) \\
    & = \Omega'_{p_j} \left( \frac{|C_j|+1-\Delta|C_i|}{|C_{p_j}|-\Delta|C_i|} - \frac{1}{|C_{-j}|} \right) ( 1 - \alpha) \\
    & \leq \Omega'_{p_j} \left( \frac{|C_j|+1}{|C_{p_j}|} - \frac{1}{|C_{-j}|} \right) ( 1 - \alpha)
\end{align*}
Assume $\Omega'_k \leq \Omega_k$ for agent $k$.
Then for any child $m$ of $k$,
    \begin{align}
        \Omega'_{m} &\leq \Omega'_{p_m} \left( \frac{|C_m|+1}{|C_{p_m}|} - \frac{1}{|C_{-m}|} \right) ( 1 - \alpha) \notag\\
        &=\Omega'_{k} \left( \frac{|C_m|+1}{|C_{k}|} - \frac{1}{|C_{-m}|} \right) ( 1 - \alpha)\notag\\
        &\leq \Omega_{k} \left( \frac{|C_m|+1}{|C_{k}|} - \frac{1}{|C_{-m}|} \right) ( 1 - \alpha)\notag\\
        &=\Omega_{m} \label{1}
    \end{align}
The base case is that 
\begin{align}
    \Omega'_{s} = \Omega_{s} = 1
\end{align}
Combining (1) and (2), by induction, we know that for any agent $j$,
\begin{align*}
    \Omega'_{j} \leq \Omega_{j}
\end{align*}
Then, for agent $i$,
\begin{align*}
    \omega'_i
    & = \Omega'_{p_i} \left[ \frac{1}{|C'_{-i}|} + \left( \frac{|C'_i|+1}{|C'_{p_i}|} - \frac{1}{|C'_{-i}|} \right) \alpha \right] \\
    & \leq \Omega_{p_i} \left[ \frac{1}{|C_{-i}|} + \left( \frac{|C_j|+1-\Delta|C_i|}{|C_{p_i}|-\Delta|C_i|} - \frac{1}{|C_{-i}|} \right) \alpha \right] \\
    & \leq \Omega_{p_i} \left[ \frac{1}{|C_{-i}|} + \left( \frac{|C_j|+1}{|C_{p_i}|} - \frac{1}{|C_{-i}|} \right) \alpha \right] = \omega_i
\end{align*}
Therefore, agent $i$ will suffer a loss if the amount of her descendants decreases so she has no incentives to block any of her children's participation.
\end{proof}

\begin{lemma}\label{lem:wbb}
In the PRST, the total share distributed to all agents is exactly  $\mathcal{B}$.
\end{lemma}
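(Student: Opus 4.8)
The plan is to reduce the statement to a purely combinatorial identity about the weights $\omega_i$. Since $b_i = \omega_i\mathcal{B}$ and $\mathcal{B}$ is a fixed constant, it suffices to show that $\sum_{i\in N}\omega_i = 1$. Recalling that $\Omega_s=1$ and that the subtree rooted at $s$ (excluding $s$) is all of $N$, so $C_s=N$, I would prove the stronger invariant that for \emph{every} node $j\in N\cup\{s\}$,
$$\sum_{i\in C_j}\omega_i = \Omega_j,$$
and then specialize it to $j=s$ to conclude $\sum_{i\in N}\omega_i = \Omega_s = 1$.

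The natural tool is structural induction on the tree from the leaves upward. The single algebraic fact that makes everything work is that, by the recursive definition, $\omega_i + \Omega_i = \Omega_{p_i}\cdot\mathsf{total} = \Omega_{p_i}\cdot\frac{|C_i|+1}{|C_{p_i}|}$; that is, the share $i$ keeps plus the share $i$ passes down together equal the total budget $i$ receives from its parent. For the base case, a leaf $j$ has $|C_j|=0$, so $\mathsf{total}=\mathsf{base}=1/|C_{p_j}|$ and hence $\Omega_j=0$, which matches the empty sum $\sum_{i\in C_j}\omega_i=0$.

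For the inductive step at a node $j$ with direct children $m_1,\dots,m_k$, I would split $C_j$ into the children together with their subtrees, $C_j = \{m_1,\dots,m_k\}\cup\bigcup_t C_{m_t}$, apply the induction hypothesis $\sum_{i\in C_{m_t}}\omega_i=\Omega_{m_t}$, and use the algebraic fact above to collapse each $\omega_{m_t}+\Omega_{m_t}$ into $\Omega_j\cdot\frac{|C_{m_t}|+1}{|C_j|}$, using $p_{m_t}=j$ so that $\Omega_{p_{m_t}}=\Omega_j$ and $|C_{p_{m_t}}|=|C_j|$. What remains is the counting identity $\sum_{t}(|C_{m_t}|+1)=|C_j|$, which holds because the subtrees rooted at the children of $j$ partition $C_j$ (each descendant of $j$ lies in exactly one such subtree). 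This yields $\sum_{i\in C_j}\omega_i=\frac{\Omega_j}{|C_j|}\cdot|C_j|=\Omega_j$, completing the induction.

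I expect the main obstacle to be identifying the right inductive invariant rather than the algebra itself: once the claim $\sum_{i\in C_j}\omega_i=\Omega_j$ is guessed, the role of $\Omega_i$ as the ``pass-down'' proportion makes the telescoping essentially automatic. The only points requiring care are the leaf base case (empty-sum convention) and applying the subtree-size identity to the correct indices, but no delicate estimates are involved.
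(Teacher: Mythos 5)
Your proposal is correct and follows essentially the same route as the paper's proof: a bottom-up induction on the tree using the conservation identity $\omega_i+\Omega_i=\Omega_{p_i}\cdot\frac{|C_i|+1}{|C_{p_i}|}$ and the partition of $C_j$ into the children's subtrees, concluding by summing at the root with $\Omega_s=1$. The only (cosmetic) difference is that you state the invariant as $\sum_{i\in C_j}\omega_i=\Omega_j$ with leaves as the base case, whereas the paper phrases it as $\omega_i+\sum_{j\in C_i}\omega_j=\frac{|C_i|+1}{|C_{p_i}|}\Omega_{p_i}$ and starts the induction at nodes whose children are all leaves; your version is the cleaner of the two.
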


\begin{proof}
Considering an agent $i\in N$, we have 
$$ \omega_i = \Omega_{p_i} \left( \frac{1}{|C_{p_i}|-|C_i|} + \left( \frac{|C_i|+1}{|C_{p_i}|} - \frac{1}{|C_{p_i}|-|C_i|} \right) \cdot \alpha \right) $$
and the descendants of $i$ can at most be distributed
$$ \Omega_i = \Omega_{p_i} \left( \frac{|C_i|+1}{|C_{p_i}|} - \frac{1}{|C_{p_i}|-|C_i|} \right) \cdot(1-\alpha) $$ 
in total, which means that
$$ \omega_i+\sum_{j \in C_i} \omega_j \leq \omega_i + \Omega_i = \frac{|C_i|+1}{|C_{p_i}|} \Omega_{p_i} $$ 
where the equation holds if and only if 
    $$\sum_{j \ \in \ C_i}\omega_j = \Omega_{p_i}\left( \frac{|C_i|+1}{|C_{p_i}|} - \frac{1}{|C_{p_i}|-|C_i|} \right) \cdot(1-\alpha)$$
Then consider an agent $k$ whose children are all leaf nodes (i.e., $C_l = \emptyset$ for all $l \in C_k$). We can get
$$ \sum_{l \ \in \ C_k}\omega_l = \Omega_{p_k} \left( \frac{|C_k|+1}{|C_{p_k}|} - \frac{1}{|C_{p_k}|-|C_k|} \right) (1-\alpha) $$
By the definition of $\omega_k$ $$ \omega_k = \Omega_{p_k} \left( \frac{1}{|C_{p_k}|-|C_k|} + \left( \frac{|C_k|+1}{|C_{p_k}|} - \frac{1}{|C_{p_k}|-|C_k|} \right) \cdot \alpha \right) $$
Then $$ \omega_k+\sum_{l \ \in \ C_k}\omega_l = \frac{|C_k|+1}{|C_{p_k}|}\Omega_{p_k} $$

Hence, by induction, $\omega_i+\sum_{j \in C_i} \omega_j = \frac{|C_i|+1}{|C_{p_i}|} \Omega_{p_i}$ holds for all agent $i\in N$. Then the total share of all agents are
\begin{align*}
    \sum_{i\in N} b_i & = \sum_{i\in N} \omega_i \mathcal{B}  = \mathcal{B}\sum_{i\in N} \omega_i \\
    & = \mathcal{B} \sum_{i\in r_s} \left( \omega_i+\sum_{j \ \in \ C_i}\omega_j \right)  \\
    & = \mathcal{B} \cdot \Omega_s \sum_{i\in r_s} \frac{|C_i|+1}{|C_{s}|}  = \mathcal{B}
\end{align*}

Therefore, the total share distributed is exactly $\mathcal{B}$.
\end{proof}

\subsection{Network-based Redistribution Framework}
\par Now we propose our redistribution framework for diffusion auctions. We first introduce a key concept called the \emph{diffusion critical tree}, which reveals who takes the most important role for each agent's participation and simplifies the graph structure.
\par Given a report profile $\theta'$, and the induced graph $G(\theta')$, we can generate a diffusion critical tree $T(\theta')$ from $G(\theta')$ as the following.
\begin{itemize}
    \item $T(\theta')$ is a rooted tree, where sponsor $s$ is the root and there is an edge $(s, i)$ for all $i\in r_s$.
    \item For all agents $i,j \in N$, there is an edge $(i,j)$ if and only if (1) $i$ is a cut-point to disconnect $j$ from $s$, (2) there is no cut-point to disconnect $j$ from $i$.
    Intuitively, agent $i$ is the closest agent to $j$ whose leaving will block $j$'s participation.
\end{itemize}

\begin{figure}[htbp]
    \centering  
	\begin{subfigure}[b]{0.2\textwidth}
        \includegraphics[width=\textwidth]{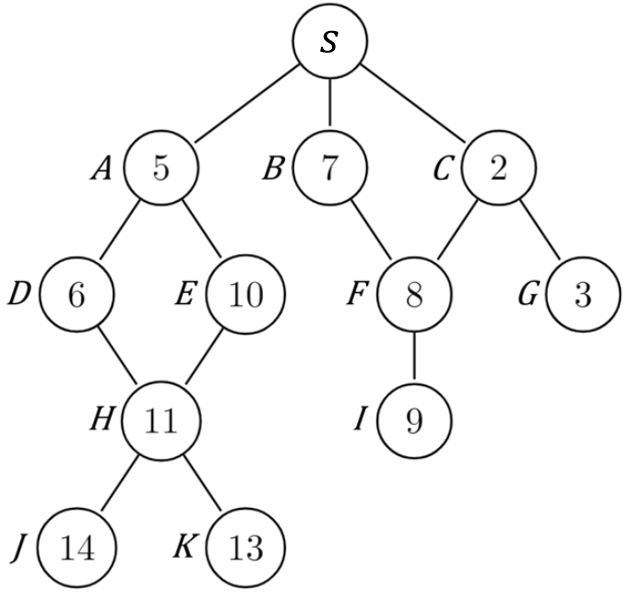}
        \caption*{(1)}
        \label{fig:2.21}
    \end{subfigure}
    	\begin{subfigure}[b]{0.035\textwidth}
        \includegraphics[width=\textwidth]{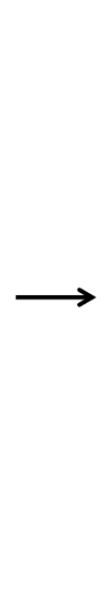}
        \caption*{}
        \label{fig:2.22}
    \end{subfigure}
    \begin{subfigure}[b]{0.23\textwidth}
        \includegraphics[width=\textwidth]{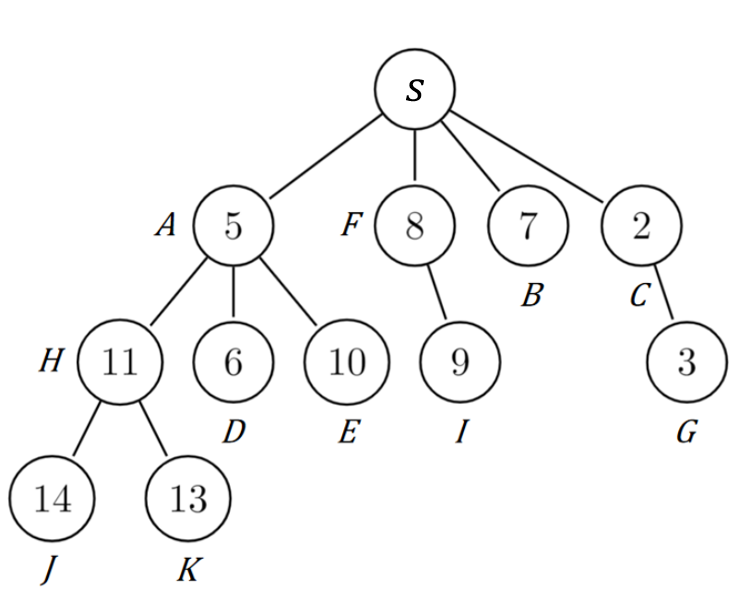}
        \caption*{(2)}
        \label{fig:2.23}
        \end{subfigure}
    \caption{(1) is a network $G(\theta')$. (2) is the corresponding diffusion critical tree $T(\theta')$. The number in each node is the reported valuation of the agent.}
    \label{fig:trans}
\end{figure}

\par Figure~\ref{fig:trans} shows an example of the above process. In Figure~\ref{fig:trans}(1), taking agent $H$ as an example, we can see that the closest agent whose leaving will block her participation is agent $A$. Hence, the parent of agent $H$ in $T(\theta')$ is agent $A$. We call agent $A$ as $H$'s \emph{critical parent}, and agent $H$ and $A$ are $J$'s \emph{critical ancestors}. 
\par Then, we present our network-based redistribution mechanism framework with diffusion auction mechanism $\mathcal{M}^a$ as the input. 

\begin{framed}

 \noindent\textbf{Network-based Redistribution Mechanism Framework (NRMF)}
 
 \noindent\rule{\textwidth}{0.5pt}
 
 \noindent\textsc{Input}: a diffusion auction mechanism $\mathcal{M}^a$ and 
 
$\hspace{1.608em} $ a report profile $\theta'$.
 
 \noindent\rule{\textwidth}{0.5pt}
 
 \begin{enumerate}
     \item Run diffusion auction mechanism $\mathcal{M}^a$ on $\theta'$ and it outputs $\pi^a(\theta')$, $x^a(\theta')$. The revenue gain by $\mathcal{M}^a$ is $S^a(\theta')$. \item Let $\pi(\theta') = \pi^a(\theta')$.
     \item Let $T(\theta')$ be the diffusion critical tree of $G(\theta')$. 
     Let $T_i$ be the subtree in $T(\theta')$ rooted by $i \in N$.
     \item Set $\mathcal{B}= 1$ and run the PRST on $T(\theta')$, and get $b_i$ for each $i\in N$.
     \item Let $\hat{r}_s$ be the set of the neighbours of the sponsor $s$ in $T(\theta')$. And agents in the set are labelled as $(m_1,..,m_{|\hat{r}_s|})$.
     \item For each subtree $T_{m_k}$, $1\leq k\leq |\hat{r}_s|$, set $\theta''$ as
     $$\theta_i'' = \left\{
    \begin{array}{ccl}
    \theta_i'& &{\text{if agent} \ i \notin T_{m_k}} \\
     & &\\
    (0, \emptyset) & &{\text{if agent} \ i \in T_{m_k}}
    \end{array}\right.$$
    and simulate $\mathcal{M}^a$ on $\theta''$. The revenue gained by $\mathcal{M}^a$ is $S^a(\theta'')$. Let $B_k = S^a(\theta'')$.
    \item For each $i\in N$, set share $\hat{b}_i= b_i \cdot B_k$ if $i\in T_{m_k}$ ($1\leq k\leq |\hat{r}_s|$).
    \item Let $R_i(\theta') = \hat{b}_i = \omega_i B_k$  (since  $\mathcal{B} = 1$) and $x_i(\theta') = x^a_i(\theta') - R_i(\theta')$.
 \end{enumerate}
 
 \noindent\rule{\textwidth}{0.5pt}
 
 \noindent\textsc{Output}: the allocation $\pi(\theta')$ and the payment $x(\theta')$.
\end{framed}

\par Intuitively, the diffusion critical tree $T(\theta')$ is divided into $|\hat{r}_s|$ subtrees. The $B_k$ for a subtree $T_{m_k}$ is the revenue obtained by running $\mathcal{M}^a$ on $\theta''$ where $T_{m_k}$ is blocked. The $B_k$ is shared by all agents in $T_{m_k}$, and it is independent of these agents' report profiles. We will show that the procedure of the PRST will not affect the independence of $B_k$ and that the whole mechanism is incentive compatible. It should be noted that as $\alpha$ grows larger, it will redistribute more to the inviters; otherwise, the invitees will receive more. We can flexibly change the value of $\alpha$ in the PRST according to the practical requirements without affecting the properties. Note that when the input mechanism is simply running traditional mechanism (e.g., the VCG mechanism) in traditional settings (i.e., all agents are sponsor's neighbours),  NRMF still works. Therefore, our framework is a general solution for redistribution problems with or without networks.

\par We now demonstrate that NRMF can satisfy the desirable properties of IR, IC, and non-deficit if the input diffusion auction $\mathcal{M}^a$ is IR, IC, non-deficit and revenue monotonic. 

\begin{theorem}
The instance of diffusion redistribution mechanism given by NRMF is \textbf{individually rational} (IR) if the input diffusion auction $\mathcal{M}^a$ is IR and non-deficit.
\end{theorem}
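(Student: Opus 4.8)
The plan is to observe that NRMF modifies the payment of the input auction $\mathcal{M}^a$ only by subtracting a nonnegative redistribution, so each agent's utility can only increase relative to $\mathcal{M}^a$, which is already IR. First I would fix an arbitrary agent $i$ who reports her true valuation, i.e. consider a profile of the form $\theta' = ((v_i, r_i'), \theta'_{-i})$ as in the IR definition. Since NRMF sets $\pi_i(\theta') = \pi^a_i(\theta')$ and $x_i(\theta') = x^a_i(\theta') - R_i(\theta')$, the utility decomposes as
$$u_i(\theta_i, \theta') = \pi^a_i(\theta') \cdot v_i - x^a_i(\theta') + R_i(\theta').$$
The first two terms are exactly the utility of agent $i$ under $\mathcal{M}^a$ on the same profile, which is nonnegative because $\mathcal{M}^a$ is IR. Hence it suffices to show that the transfer $R_i(\theta') \geq 0$.

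The second step is therefore to verify $R_i(\theta') = \omega_i B_k \geq 0$, where $k$ is the index of the unique child-subtree $T_{m_k}$ of the root $s$ that contains $i$ in the diffusion critical tree $T(\theta')$. Here I would first note that $R_i$ is well defined: every non-root node of $T(\theta')$ lies in exactly one subtree rooted at a neighbour of $s$, so the assignment of $i$ to a single $B_k$ is unambiguous. For the coefficient, applying Lemma~\ref{lem:ir} with $\mathcal{B} = 1$ gives $\omega_i = b_i \geq 0$. For the factor $B_k$, recall that $B_k = S^a(\theta'')$ is the revenue obtained by running $\mathcal{M}^a$ on the profile $\theta''$ in which every agent of $T_{m_k}$ reports $(0, \emptyset)$; since the non-deficit property of $\mathcal{M}^a$ holds on \emph{every} report profile, in particular $S^a(\theta'') \geq 0$, so $B_k \geq 0$.

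Combining the two steps yields $u_i(\theta_i, \theta') \geq 0 + 0 = 0$, which is precisely IR for the mechanism produced by NRMF, and this holds for every $i$ and every $\theta'$ of the required form.

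I do not expect a genuine obstacle: the argument is essentially the bookkeeping of splitting the redistributed utility into the IR-guaranteed utility of $\mathcal{M}^a$ plus a nonnegative transfer. The only points requiring mild care are that the nonnegativity of $B_k$ must invoke non-deficit on the \emph{modified} profile $\theta''$ rather than on the original $\theta'$, and that the nonnegativity of the coefficient $\omega_i$ should be cited from the PRST guarantee in Lemma~\ref{lem:ir} rather than read off its defining formula directly.
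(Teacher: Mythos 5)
Your proposal is correct and follows essentially the same route as the paper's own proof: decompose the utility into the $\mathcal{M}^a$-utility (nonnegative by IR) plus the transfer $R_i(\theta') = \omega_i B_k$, then get $\omega_i \geq 0$ from the PRST nonnegativity lemma and $B_k \geq 0$ from non-deficit applied to the modified profile $\theta''$. Your added remarks on well-definedness of $k$ and on invoking non-deficit at $\theta''$ rather than $\theta'$ are sound refinements of the same argument.
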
\label{tem:ir}
\begin{proof}
Consider agent $i$'s utility, $i \in N$.
$$u_i(\theta_i,\theta') = \pi^a_i(\theta')v_i - x^a_i(\theta') + R_i(\theta')$$

The part of $\pi^a_i(\theta')v_i - x^a_i(\theta')$ is the utility of $i$ in the diffusion auction mechanism $\mathcal{M}^a$, which is non-negative if agent $i$ reports her type truthfully since $\mathcal{M}^a$ is IR, i.e.,
\begin{equation}
    \pi^a_i((\theta_i, \theta_{-i}'))v_i - x^a_i((\theta_i, \theta_{-i}')) \geq 0
\end{equation}

On the other hand, considering the part $R_i(\theta') = \omega_i B_k$, for $i\in T_{m_k}$, $1\leq k\leq |\hat{r}_s|$, since $\mathcal{M}^a$ is non-deficit, then
\begin{equation}
    B_k \geq 0
\end{equation}
Finally, according to Lemma~\ref{lem:ir}, we have
\begin{equation}
    \omega_i \geq 0
\end{equation}
Combining (3), (4), and (5), we get
$$
    u_i(\theta_i,(\theta_i, \theta_{-i}')) \geq 0
$$
Therefore, the mechanism is individually rational.
\end{proof}

\begin{theorem}
The instance of diffusion redistribution mechanism given by NRMF is \textbf{incentive compatible} (IC) if the input diffusion auction $\mathcal{M}^a$ is IC.
\end{theorem}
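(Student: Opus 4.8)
The plan is to fix the other agents' reports $\theta_{-i}'$ and prove that agent $i$'s utility
$$u_i(\theta_i,(\theta_i',\theta_{-i}')) = \bigl(\pi^a_i(\theta')v_i - x^a_i(\theta')\bigr) + \omega_i B_k$$
(where $\theta'=(\theta_i',\theta_{-i}')$ and $i\in T_{m_k}$) is maximised at the truthful report $\theta_i'=\theta_i$. The redistribution $R_i=\omega_i B_k$ is simply added on top of the utility that $i$ receives in $\mathcal{M}^a$, so it suffices to show that truthful reporting maximises each of the two summands separately; their sum is then maximised as well. The first summand is precisely the utility of $i$ under $\mathcal{M}^a$, which the IC of $\mathcal{M}^a$ already maximises at $\theta_i'=\theta_i$. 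Hence the whole argument reduces to controlling the redistribution term $\omega_i B_k$ under deviations of $i$.

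First I would dispose of valuation deviations. A report $v_i'\neq v_i$ changes no edge of the induced graph $G(\theta')$, hence leaves the diffusion critical tree $T(\theta')$, the partition into top-level subtrees $\{T_{m_k}\}$, and the coefficient $\omega_i$ all unchanged; and since $B_k=S^a(\theta'')$ is computed only after overwriting every agent of $T_{m_k}$ (including $i$) with $(0,\emptyset)$, it never sees $v_i'$. So a valuation deviation moves only the auction summand, which is already handled.

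The substantive case is a neighbour deviation $r_i'\subsetneq r_i$, for which I would establish two facts. (i) $B_k$ is independent of $i$'s report. The top-level subtree containing $i$ is pinned down by the cut-point structure on the $s$-to-$i$ paths, i.e.\ by agents strictly above $i$; because $i$ can only delete her own outgoing edges, she can neither change $m_k$ nor leave $T_{m_k}$, so she is always assigned the same index $k$. Moreover $\hat r_s$ and the sibling subtrees $T_{m_{k'}}$ ($k'\neq k$) are disjoint from $T_{m_k}$ and unreachable via edges below $i$, so their reports---the only ones that survive the zeroing in $\theta''$---are untouched by $i$. The delicate point is that deleting edges may drop some descendants of $i$ out of $D_s$; but every such agent lies inside $T_{m_k}$, and by Definition~\ref{def:mechanism} an agent outside $D_s(G(\theta''))$ both contributes nothing and does not influence the agents in $D_s(G(\theta''))$, so $S^a(\theta'')$, and hence $B_k$, is unchanged. (ii) $\omega_i$ is weakly maximised by reporting $r_i$. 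Deleting neighbours can only (weakly) shrink the set of $i$'s descendants in $T(\theta')$, and Lemma~\ref{lem:ic} shows exactly that any such shrinkage yields $\omega_i'\leq\omega_i$. Since $\mathcal{M}^a$ is non-deficit we have $B_k=S^a(\theta'')\geq 0$, and $B_k$ is fixed by (i), so $\omega_i'B_k\leq\omega_iB_k$.

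Combining the cases, every deviation of $i$ weakly decreases the redistribution term while truthful reporting maximises the auction term, so $u_i$ is maximised at $\theta_i'=\theta_i$ and the mechanism is IC. The hard part will be fact (i): making rigorous the claim that $i$'s edge deletions never perturb $B_k$, which requires tracking how neighbour-blocking reshapes the lower part of $T_{m_k}$ and invoking the ``independence from unreachable agents'' clause of Definition~\ref{def:mechanism} to absorb the descendants that fall out of $D_s(G(\theta''))$. A secondary point to pin down is the translation from graph-level neighbour reports to descendant counts in the critical tree, which is what licenses the application of Lemma~\ref{lem:ic} in fact (ii).
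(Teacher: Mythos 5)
Your proposal is correct and follows essentially the same route as the paper: decompose the utility into the $\mathcal{M}^a$-utility (handled by the IC of $\mathcal{M}^a$) plus the redistribution term $\omega_i B_k$, argue that $B_k$ is unaffected by $i$'s report because it is computed with all of $T_{m_k}$ (including $i$) overwritten by $(0,\emptyset)$, and invoke Lemma~\ref{lem:ic} to show that $\omega_i$ can only drop under a deviation. You are in fact slightly more careful than the paper on two points it glosses over: the need for $B_k\geq 0$ (i.e.\ non-deficit of $\mathcal{M}^a$, which the theorem statement omits but the final inequality tacitly uses) and the handling of descendants that fall out of $D_s(G(\theta''))$ after an edge deletion.
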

\begin{proof}
Consider agent $i$'s utility when she truthfully reports her type $\theta_i$, $i \in N$.
$$
    u_i(\theta_i,\theta') = \pi^a_i(\theta')v_i - x^a_i(\theta') + R_i(\theta')
$$
where $\theta' = (\theta_i, \theta_{-i}')$.
\par If agent $i$ misreports $\theta'_i$, then suppose $\ell$ is the loss of $i$ in diffusion auction mechanism $\mathcal{M}^a$ because of misreport. Since $\mathcal{M}^a$ is IC, $\ell \geq 0$ and we have:
$$
    u_i(\theta_i, \theta')- u_i(\theta_i, \theta'') = \ell + R_i(\theta')-R_i(\theta'')
$$
where $\theta'' = (\theta_i', \theta_{-i}')$,
which means agent $i$ reports profile $\theta_i'$ that may be different from the type $\theta_i$, and the report profile of all agents except for $i$ remains $\theta_{-i}'$.

\par Suppose $i\in T_{m_k}$ in $T(\theta')$. Since $i$ cannot change her position in critical diffusion tree, which only depends on agents who invite her, then $i\in T_{m_k}$ in $T(\theta'')$, too. Let $R_i(\theta') = \omega_i B_k$ and $R_i(\theta'') = \omega_i' B_k'$. Since $B_k$ and $B_k'$ is the revenue $\mathcal{M}^a$ can achieve without the participation of agents in $T_{m_k}$, we have $B_k = B_k'$. Finally, with misreporting, $i$ can only decrease the number of her descendants in $T_{m_k}$. According to Lemma~\ref{lem:ic}, we have $\omega_i \geq \omega_i'$. Therefore,
\begin{align*}
    u_i(\theta_i, \theta') - u_i(\theta_i, \theta'') & = \ell + R_i(\theta')-R_i(\theta'') \\
    & = \ell + (\omega_i - \omega'_i) \cdot B_k \geq 0
\end{align*}
from which we can conclude that the mechanism is IC.
\end{proof}

A diffusion auction is revenue monotonic if the revenue of the auction monotonically increases as the number of participants increases.

\begin{definition}
    A diffusion auction is \textbf{revenue monotonic} if for all $\theta', \theta''\in \Theta$ with $D_s(G(\theta')) \subseteq D_s(G(\theta''))$, and for all $i\in D_s(G(\theta'))$, $v_i' = v_i''$ and $r_i'\subseteq r_i''$, we have $S(\theta') \leq S(\theta'')$.
\end{definition}

Furthermore, if all the new participants have relatively small valuations, it should not affect the revenue of the sponsor since they have no contribution. We call it \emph{revenue invariance}.

\begin{definition}
    A diffusion auction is \textbf{revenue invariant} if 
    \begin{itemize}
        \item for all $\theta', \theta''\in \Theta$ with $D_s(G(\theta')) \subseteq D_s(G(\theta''))$, and for all $i\in D_s(G(\theta'))$, $v_i' = v_i''$ and $r_i'\subseteq r_i''$;
        \item for all agents in $D_s(G(\theta''))\setminus D_s(G(\theta'))$, any of them cannot be the winner \textbf{even if} we remove the winner and all her critical ancestors under $\theta'$,
    \end{itemize} then we have $S(\theta') = S(\theta'')$.
\end{definition}\label{def:ri}

It is easy to prove that almost all the existing auction mechanisms (with or without diffusion)~\cite{vickrey1961counterspeculation,DBLP:conf/aaai/LiHZZ17,zhang2020incentivize,li2022diffusion}, satisfy revenue invariance.

\begin{theorem}
The instance of diffusion redistribution mechanism given by NRMF is \textbf{non-deficit} (ND) if the input diffusion auction $\mathcal{M}^a$ is revenue monotonic.
\end{theorem}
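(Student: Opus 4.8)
The plan is to reduce non-deficit to the single inequality $\sum_{i\in N} R_i(\theta') \le S^a(\theta')$, since by construction $x_i(\theta') = x^a_i(\theta') - R_i(\theta')$ gives $S(\theta') = \sum_{i\in N} x_i(\theta') = S^a(\theta') - \sum_{i\in N} R_i(\theta')$. First I would group the redistribution by the subtrees hanging off the sponsor: writing $R_i(\theta') = \omega_i B_k$ for $i\in T_{m_k}$, the total redistributed amount is $\sum_{i\in N} R_i(\theta') = \sum_{k=1}^{|\hat{r}_s|} B_k \bigl(\sum_{i\in T_{m_k}} \omega_i\bigr)$, where each weight $B_k$ is a fixed number independent of the internal reports of $T_{m_k}$.

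The next step computes the inner sum $\sum_{i\in T_{m_k}} \omega_i$, and this is exactly where I would invoke Lemma~\ref{lem:wbb}: its proof establishes the identity $\omega_i + \sum_{j\in C_i}\omega_j = \frac{|C_i|+1}{|C_{p_i}|}\Omega_{p_i}$ for every agent $i$. Applying it to $i=m_k$, whose parent is $s$ with $\Omega_s = 1$ and $|C_s|$ equal to the number of participating agents, gives $\sum_{i\in T_{m_k}} \omega_i = \frac{|C_{m_k}|+1}{|C_s|}$, i.e.\ the share of subtree $T_{m_k}$ equals its size $|C_{m_k}|+1$ divided by the total number of participants. Since the subtrees $T_{m_1},\dots,T_{m_{|\hat{r}_s|}}$ partition the participating agents, these weights sum to $1$, so $\sum_{i\in N} R_i(\theta')$ is a convex combination of the revenues $B_1,\dots,B_{|\hat{r}_s|}$.

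The heart of the argument, and the step I expect to be the main obstacle, is the bound $B_k \le S^a(\theta')$ for every $k$. Here I would appeal to revenue monotonicity: recall $B_k = S^a(\theta'')$, where $\theta''$ replaces the reports of every agent in $T_{m_k}$ by $(0,\emptyset)$. Blocking a subtree can only shrink the reachable set, so $D_s(G(\theta''))\subseteq D_s(G(\theta'))$, and every surviving agent outside $T_{m_k}$ keeps its report; revenue monotonicity then yields $S^a(\theta'')\le S^a(\theta')$. The delicate point to check carefully is the root $m_k$ of the blocked subtree: because $m_k$'s critical parent is $s$, $m_k$ typically remains reachable in $G(\theta'')$ while now reporting valuation $0$, so one must confirm that the hypotheses of revenue monotonicity are genuinely met for the surviving participants (equivalently, that lowering $m_k$'s report to $(0,\emptyset)$ cannot raise the revenue). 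Once $B_k \le S^a(\theta')$ is secured, I combine it with the convex-combination identity to get $\sum_{i\in N} R_i(\theta') = \sum_{k} B_k \tfrac{|C_{m_k}|+1}{|C_s|} \le S^a(\theta')\sum_{k} \tfrac{|C_{m_k}|+1}{|C_s|} = S^a(\theta')$, and therefore $S(\theta') = S^a(\theta') - \sum_{i\in N} R_i(\theta') \ge 0$, which is precisely the non-deficit property.
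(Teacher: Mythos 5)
Your proposal follows essentially the same route as the paper's proof: both rest on the bound $B_k \le S^a(\theta')$ obtained from revenue monotonicity together with Lemma~\ref{lem:wbb}'s fact that the weights $\omega_i$ sum to one, so that the total redistributed amount is at most $S^a(\theta')$ and hence $S(\theta')\ge 0$. Your per-subtree convex-combination bookkeeping and your flagged concern about whether zeroing $m_k$'s report (which leaves $m_k$ reachable with a changed valuation) strictly meets the hypotheses of revenue monotonicity are refinements the paper does not spell out, but the underlying argument is the same.
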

\begin{proof}

\par According to Lemma~\ref{lem:wbb}, we have
$ \sum_{i\in N} \omega_i = 1 $.
\par On the other hand, since $\mathcal{M}^a$ is revenue monotonic, then for each subtree $T_{m_k}$ in the diffusion critical tree $T(\theta')$, we have $B_k\leq S^a(\theta')$. Hence, 
\begin{align*}
    S(\theta') & = \sum_{i \in N} x_i(\theta')
    = S^a(\theta') - \sum_{k} \sum_{i\in T_{m_k}} \omega_i \cdot B_k \\
    & \geq S^a(\theta') - \sum_{i\in N} \omega_i \cdot S^a(\theta') = 0
\end{align*}
Therefore, the mechanism is non-deficit.
\end{proof}

Then we will discuss the properties of ABB and $\epsilon$-ABB. 
In the traditional setting, when we talk about ABB, the increase of agents corresponds to the increase of the sponsor's neighbours in our setting. However, on the social networks, it is unreasonable to only increase sponsor's neighbours. So the number of other agents' neighbours on the graph will grow together. Due to the existence of common neighbours, the increase of each agent's neighbours on the origin graph is hard to describe. So we will discuss agents that grow to infinity in the diffusion critical tree, which reflects the invitation relationship in social networks. If all agents have the same probability of inviting someone new in the critical tree, we define it as evenly growing.

\begin{definition}
 A diffusion critical tree $T$ is \textbf{evenly growing} if for all subtree $T_i \subset T $ where $i \in N$, we have
    $$\lim_{n \to \infty} \frac{|T_i|}{n} = 0. $$
\end{definition}

Since the process of the diffusion auction is naturally seeking more agents, then we are also interested in the increase of the agents as a continuous process, i.e., the sponsor's neighbours are fixed and the critical tree only grows in height. 
If assuming each neighbour of the sponsor have the same potential in terms of the number of agents in their leading branches, we call the critical tree is branch-independent growing.

\begin{definition}
    A diffusion critical tree $T$ is  \textbf{branch-independent growing} if for all subtree $T_i \subset T $ rooted by $i \in \hat{r}_s$, we have
    $$\lim_{n \to \infty} \frac{|T_i|}{n} = \frac{1}{|\hat{r}_s|}. $$
\end{definition}




\begin{lemma}\label{lem:ir}
If a diffusion auction $\mathcal{M}^a$ is IR and ND, we have $0\leq S^a(\theta')\leq \overline{v}$, where $\overline{v}$ is the upper bound of all possible valuations\footnote{Otherwise, the valuation can be infinity, which is not reasonable in practice.}. 
\end{lemma}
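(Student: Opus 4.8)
The plan is to treat the two inequalities separately, since each comes from a different hypothesis. The lower bound $S^a(\theta')\geq 0$ is simply the non-deficit property instantiated at the report profile $\theta'$, so I would dispose of it in a single line.

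For the upper bound I would extract a per-agent payment bound from individual rationality and then sum. The key observation is that although IR is phrased for an agent who reports her \emph{true} valuation, it is quantified over all true types and all report profiles; hence for the fixed profile $\theta'$ I may reinterpret each agent $i$'s \emph{reported} valuation $v_i'$ as her true valuation. Concretely, applying IR to the true type $(v_i', r_i')$ with the report profile $\theta' = ((v_i', r_i'), \theta'_{-i})$ yields $\pi^a_i(\theta')\,v_i' - x^a_i(\theta') \geq 0$, i.e. $x^a_i(\theta') \leq \pi^a_i(\theta')\,v_i'$. Agents outside $D_s(G(\theta'))$ contribute $0$ on both sides by Definition~\ref{def:mechanism}, so they cause no trouble.

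Summing this inequality over all $i\in N$ gives $S^a(\theta') = \sum_{i\in N} x^a_i(\theta') \leq \sum_{i\in N} \pi^a_i(\theta')\,v_i'$. I would then invoke the allocation constraint $\sum_{i\in N}\pi^a_i(\theta') \leq 1$ together with $\pi^a_i(\theta')\in\{0,1\}$: at most one agent, the winner $w$, has $\pi^a_w(\theta')=1$, and every other summand vanishes. Hence the right-hand side equals $v_w'$ (or $0$ if the item is not sold), and since any reported valuation satisfies $v_w' \leq \overline{v}$, we conclude $S^a(\theta') \leq \overline{v}$.

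I expect the only delicate point to be the reinterpretation step: one must check that IR is genuinely a universal statement over true types, so that substituting $v_i'$ for the true valuation is legitimate, and that the payment $x^a_i(\theta')$ depends only on the report profile and not on the unobserved true type, which is precisely why the same number $x^a_i(\theta')$ appears both in the IR inequality and in the revenue sum. Everything after that is a one-line summation, so no substantial calculation is involved.
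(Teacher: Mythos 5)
Your proposal is correct and follows essentially the same route as the paper: the lower bound from ND, and the upper bound by noting that IR forces non-winners to have non-positive payments and the winner to pay at most her valuation, so the revenue is at most $\overline{v}$. Your explicit treatment of the reported-versus-true valuation issue (reinterpreting $v_i'$ as a true type, which is legitimate since IR is quantified over all true types) is in fact slightly more careful than the paper's own wording, which states $x^a_w(\theta')\leq v_w$ without comment.
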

\begin{proof}
    Since the the diffusion auction $\mathcal{M}^a$ is ND, we can get:
    $$ S^a(\theta') \geq 0$$
    
    We have mentioned in Theorem~\ref{tem:ir} that $\pi^a_i(\theta')v_i - x^a_i(\theta')$ is the utility of agent $i$ in the diffusion auction mechanism $\mathcal{M}^a$ and it is non-negative when $\mathcal{M}^a$ is IR. So for agent $w$ who wins the item, her $\pi^a_w(\theta') =1$ and payment $x^a_w(\theta') \leq v_w$. For others, they will not get the item and their payment $x^a_i(\theta') \leq 0$. Therefore, 
    $$S^a(\theta') = \sum_{i \in N} x^a_i(\theta') \leq v_w$$
    
    The report valuations of all agents have a finite upper bound $\overline{v}$ ($\mathop{max}\limits_{i\in N}$ $v_i \leq \overline{v}$). 
    Hence, $0\leq S^a(\theta')\leq \overline{v}$.

\end{proof}

\begin{theorem}\label{tem:abb}
If the input diffusion auction mechanism $\mathcal{M}^a$ is IR, non-deficit and revenue invariant, then the instance of NRMF is
\begin{itemize}
    \item[1.] \textbf{asymptotically budget-balanced} (ABB) when the diffusion critical tree $T(\theta')$ is evenly growing;

    \item[2.] \textbf{$\epsilon$-asymptotically budget-balanced} ($\epsilon$-ABB) when the diffusion critical tree $T(\theta')$ is branch-independent growing.
\end{itemize}
\end{theorem}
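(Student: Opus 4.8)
The plan is to write the leftover revenue $S(\theta')$ of the NRMF instance in closed form and then let the tree grow. First I would apply Lemma~\ref{lem:wbb} at the level of each top branch: since $m_k\in\hat r_s$ is a child of $s$ in $T(\theta')$ with $\Omega_s=1$ and $|C_s|=n$, the identity $\omega_{m_k}+\sum_{j\in C_{m_k}}\omega_j=\frac{|C_{m_k}|+1}{|C_s|}\,\Omega_s$ gives $\sum_{i\in T_{m_k}}\omega_i=\frac{|T_{m_k}|}{n}$. Hence the total redistributed amount is $\sum_{i\in N}R_i=\sum_k B_k\sum_{i\in T_{m_k}}\omega_i=\sum_k B_k\frac{|T_{m_k}|}{n}$, and since the branches partition $N$ (so $\sum_k|T_{m_k}|=n$ and $\sum_k\frac{|T_{m_k}|}{n}=1$),
$$ S(\theta') = S^a(\theta') - \sum_k B_k\frac{|T_{m_k}|}{n} = \sum_k\bigl(S^a(\theta')-B_k\bigr)\frac{|T_{m_k}|}{n}. $$

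The heart of the argument is to show, using revenue invariance, that $S^a(\theta')-B_k=0$ for all but a constant number of branches. Let $w$ be the winner under $\theta'$ and $T_{m_{k^*}}$ its branch. Blocking any branch $T_{m_k}$ with $k\ne k^*$ leaves $w$ present and still highest, so $w$ is still the winner of $\mathcal M^a(\theta''_k)$ and its critical ancestors are unchanged; moreover deleting $w$ together with its critical ancestors removes the entire branch $T_{m_{k^*}}$, because the root $m_{k^*}$ of that branch is itself a critical ancestor of $w$ (or equals $w$). Let $w_2$ be the highest bidder among $\bigcup_{k\ne k^*}T_{m_k}$ and $T_{m_{k_2}}$ its branch. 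Then for every $k\notin\{k^*,k_2\}$, no agent of $T_{m_k}$ can win even after deleting $w$ and its critical ancestors (the resulting winner is $w_2\in T_{m_{k_2}}$), so the revenue-invariance hypotheses hold for the pair $(\theta''_k,\theta')$ and $B_k=S^a(\theta''_k)=S^a(\theta')$. This collapses the sum to
$$ S(\theta') = \bigl(S^a(\theta')-B_{k^*}\bigr)\frac{|T_{m_{k^*}}|}{n} + \bigl(S^a(\theta')-B_{k_2}\bigr)\frac{|T_{m_{k_2}}|}{n}. $$

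Finally I would control the two surviving terms with the bound $0\le S^a\le\overline v$ (Lemma~\ref{lem:ir}): since $\mathcal M^a$ is IR and ND, each of $S^a(\theta'),B_{k^*},B_{k_2}$ lies in $[0,\overline v]$, so each of $|S^a(\theta')-B_{k^*}|$ and $|S^a(\theta')-B_{k_2}|$ is at most $\overline v$. In the evenly growing case, $\frac{|T_{m_{k^*}}|}{n}\to 0$ and $\frac{|T_{m_{k_2}}|}{n}\to 0$ (both are subtrees $T_i$ with $i\in N$), hence $\lim_{n\to\infty}S(\theta')=0$, giving ABB. In the branch-independent growing case each top branch satisfies $\frac{|T_{m_k}|}{n}\to\frac1{|\hat r_s|}$, hence $\lim_{n\to\infty}S(\theta')\le\frac{2\overline v}{|\hat r_s|}$, a constant, giving $\epsilon$-ABB with $\epsilon=2\overline v/|\hat r_s|$.

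The main obstacle is the middle step: one must be careful that revenue invariance applies only to branches whose agents cannot become the winner once the top of the tree is stripped away, and then recognize that this can fail for \emph{exactly two} branches---the winner's branch $T_{m_{k^*}}$ (whose blocking removes $w$ itself, so $B_{k^*}$ genuinely differs) and the branch $T_{m_{k_2}}$ of the runner-up $w_2$. Verifying that blocking a non-winning branch changes neither the winner nor its critical ancestors, and that deleting the winner's critical ancestors removes precisely $T_{m_{k^*}}$, is what makes the revenue-invariance hypothesis check out; everything else is the bounded-coefficient estimate above.
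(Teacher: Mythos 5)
Your proposal is correct and follows essentially the same route as the paper: decompose $S(\theta')=\sum_k(S^a(\theta')-B_k)\frac{|T_{m_k}|}{n}$ using $\sum_{i\in T_{m_k}}\omega_i=\frac{|T_{m_k}|}{n}$, invoke revenue invariance to kill every term except the winner's branch and the runner-up's branch, and then bound the two surviving terms via the $0\le S^a\le\overline v$ lemma before taking the evenly-growing or branch-independent limits. Your verification of the revenue-invariance hypotheses for the non-exceptional branches is somewhat more explicit than the paper's, but the decomposition, the two exceptional branches, and the final estimates coincide with the published argument.
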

\begin{proof}
After we first run diffusion auction mechanism $\mathcal{M}^a$ on $\theta'$, $T_{m_w}$ is the subtree that contains the item winner and we can get $S^a(\theta')$. Then when we remove the agents in $T_{m_w}$ from the corresponding $G(\theta')$ and run $\mathcal{M}^a$ again, the new winner is in subtree $T_{m_{w'}}$. Agent $i \in N\setminus(T_{m_w} \cup T_{m_{w'}})$ will never be the item winner even if we remove the agents in $T_{m_w}$ or $T_{m_{w'}}$. Therefore, according to the definition of revenue invariance, the attendance of these agents does not influence the revenue gained by the $\mathcal{M}^a$. Hence, we can get $B_k = S^a(\theta')$ for $k\in\{1,\dots,{|\hat{r}_s|}\}\setminus\{m_w,m_{w'}\}$. Let $B_{m_w} = S^a(\theta''_{-m_w})$ and $B_{m_w'} = S^a(\theta''_{-m_{w'}})$, and the remaining part of revenue that has not been redistributed is
\begin{align*}
 S(\theta')  &= S^a(\theta') - \sum_{k=1}^{|\hat{r}_s|} \sum_{i\in T_{m_k}} \omega_i\cdot B_k \\ 
&= S^a(\theta') - \sum_{k=1}^{|\hat{r}_s|} \frac{|T_{m_k}|}{n} \cdot B_k 
\\  &=  \left(  1-\sum_{k\in\{1,\dots,{|\hat{r}_s|}\}\setminus\{m_w,m_{w'}\}} \frac{|T_{m_k}|}{n} \right)S^a(\theta') \\  &\quad -  \frac{|T_{m_w}|}{n} \cdot S^a(\theta''_{-m_w}) - \frac{|T_{m_{w'}}|}{n} \cdot S^a(\theta''_{-m_{w'}})  \\
    & = \frac{|T_{m_w}|}{n} \cdot (S^a(\theta') - S^a(\theta''_{-m_w})) \\  &\quad + \frac{|T_{m_{w'}}|}{n} \cdot (S^a(\theta') - S^a(\theta''_{-m_{w'}}))
\end{align*}
According to the Lemma~\ref{lem:ir}, $S^a(\theta')$, $S^a(\theta''_{-m_w})$ and $S^a(\theta''_{-m_{w'}})$ are bounded. Obviously, $(S^a(\theta') - S^a(\theta''_{-m_w}))$ and $(S^a(\theta') - S^a(\theta''_{-m_{w'}}))$ are also bounded. 
\begin{itemize}

    \item [1.] When the diffusion critical tree $T(\theta')$ is evenly growing,  $\frac{|T_{m_w}|}{n}$ and $\frac{|T_{m_{w'}}|}{n}$ approach to 0 if $n$ approaches to infinity. Hence,
$$\lim_{n \to \infty} S(\theta') = 0$$
Therefore, the instance is asymptotically budget-balanced.


  \item [2.] When the diffusion critical tree $T(\theta')$ is branch-independent growing, let $(S^a(\theta') - S^a(\theta''_{-m_w}))$ and $(S^a(\theta') - S^a(\theta''_{-m_{w'}})) $ less than $\overline{S^a}$. We can get
  $$\lim_{n \to \infty} S(\theta') \leq \frac{2}{|\hat{r}_s|} \cdot \overline{S^a} = \epsilon$$ where $\epsilon$ is a constant. Therefore, the instance is $\epsilon$-asymptotically budget-balanced.
\end{itemize}
\end{proof}

Note that if the corresponding diffusion auction is also revenue monotonic, $S^a(\theta') \geq S^a(\theta''_{-m_w})$ and $S^a(\theta') \geq S^a(\theta''_{-m_{w'}})$. Then the $\epsilon$ in Theorem~\ref{tem:abb} can be $\frac{2}{|\hat{r}_s|} \cdot \overline{v}$.

\section{Instances of The Redistribution Mechanism Framework}

In our network-based redistribution mechanism framework, if we require the output mechanism to be IC and IR, then the input diffusion auction mechanism should also be IC, IR and non-deficit. The largest known set of diffusion auction mechanisms with the above properties is Critical Diffusion Mechanism (CDM)~\cite{DBLP:conf/ijcai/LiHZY19}. Especially, the first diffusion auction mechanism, Incentive Diffusion Mechanism (IDM)~\cite{DBLP:conf/aaai/LiHZZ17} is also a member in CDM, which has the highest efficiency. In this section, we input IDM and another mechanism in CDM called Threshold Neighbourhood Mechanism (TNM)~\cite{li2022diffusion} into our framework to see the outcomes. 


For convenience, we briefly introduce the idea of the IDM and TNM with our notations. Both IDM and TNM first find the agent with the highest valuation and their critical ancestors. Then the mechanisms check these agents from the sponsor to the agent with the highest valuation. For IDM, each agent will pay a certain amount to her critical parent, which is the highest valuation after removing herself from the graph. Then, she will acquire the item temporarily. If her valuation is the highest after removing her critical descendants, she will be the winner and the mechanism terminates. By contrast, agent under TNM will remove all her descendants (including non-critical ones) when we check the winner.
In addition, when agents get item, 
they will also pay the same amount of money as the IDM, but their critical parents will just get the value of the highest valuation when we ignore all their descendants. The rest of the payments will be directly given to the sponsor.

\begin{figure*}[htbp]
    \centering  
	\begin{subfigure}[b]{0.25\textwidth}
        \includegraphics[width=\textwidth]{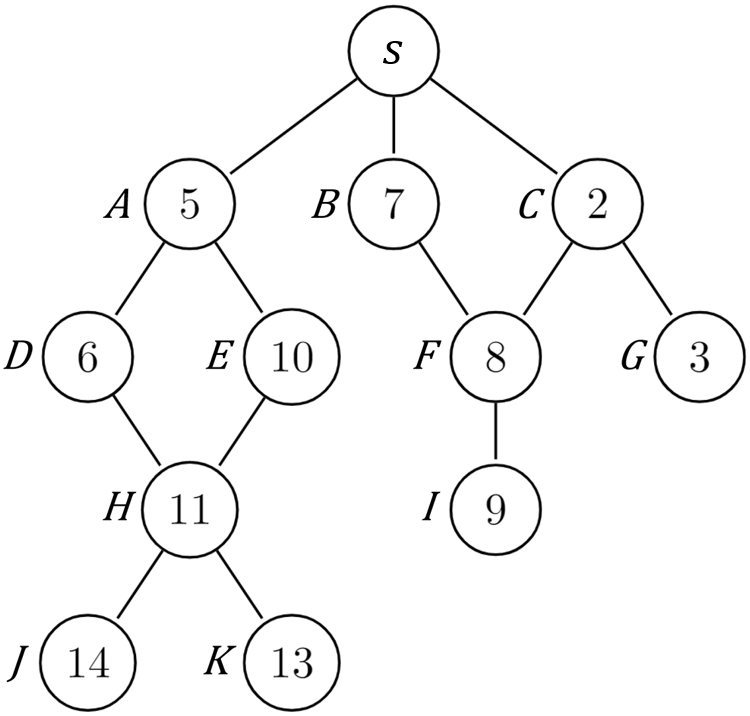}
        \caption*{(1)}
        \label{fig:5.1}
    \end{subfigure}
    \begin{subfigure}[b]{0.075\textwidth}
        \includegraphics[width=\textwidth]{white.PNG}
        \caption*{}
    \end{subfigure}
	\begin{subfigure}[b]{0.25\textwidth}
        \includegraphics[width=\textwidth]{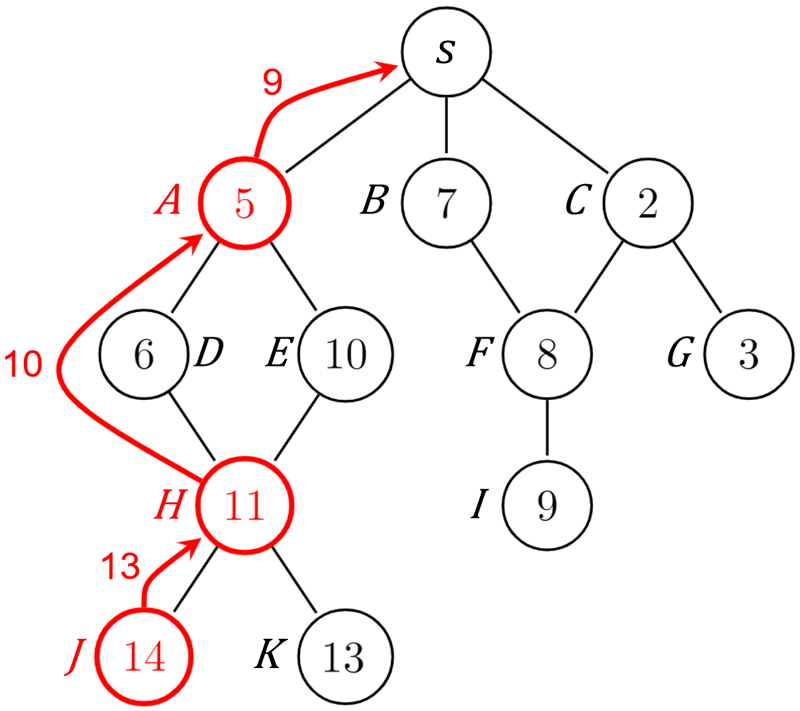}
        \caption*{(2)}
        \label{fig:5.1}
    \end{subfigure}
        \begin{subfigure}[b]{0.075\textwidth}
        \includegraphics[width=\textwidth]{white.PNG}
        \caption*{}
    \end{subfigure}
    	\begin{subfigure}[b]{0.25\textwidth}
        \includegraphics[width=\textwidth]{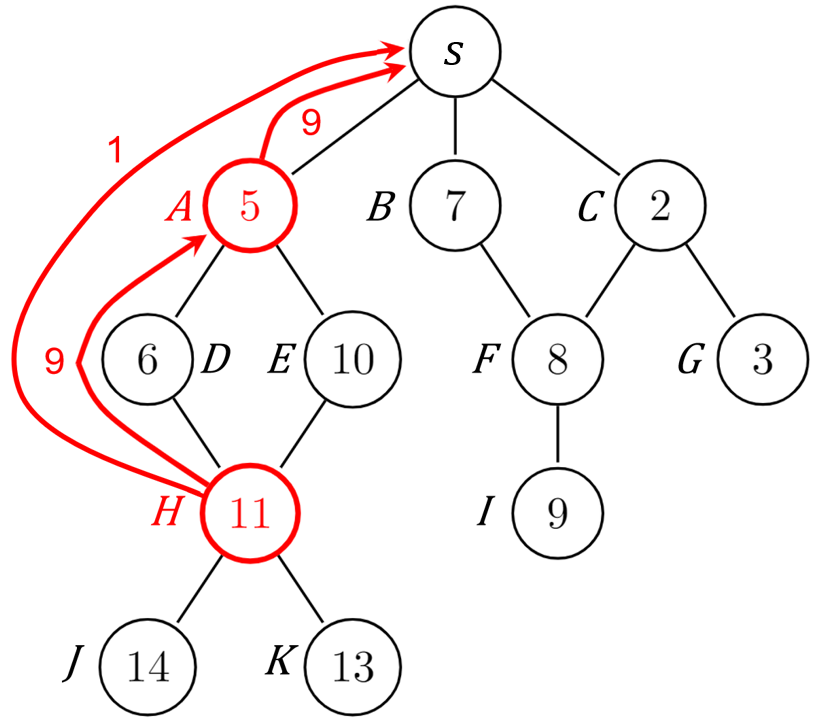}
        \caption*{(3)}
        \label{fig:5.1}
    \end{subfigure}
    \\
    \begin{subfigure}[b]{0.33\textwidth}
        \includegraphics[width=\textwidth]{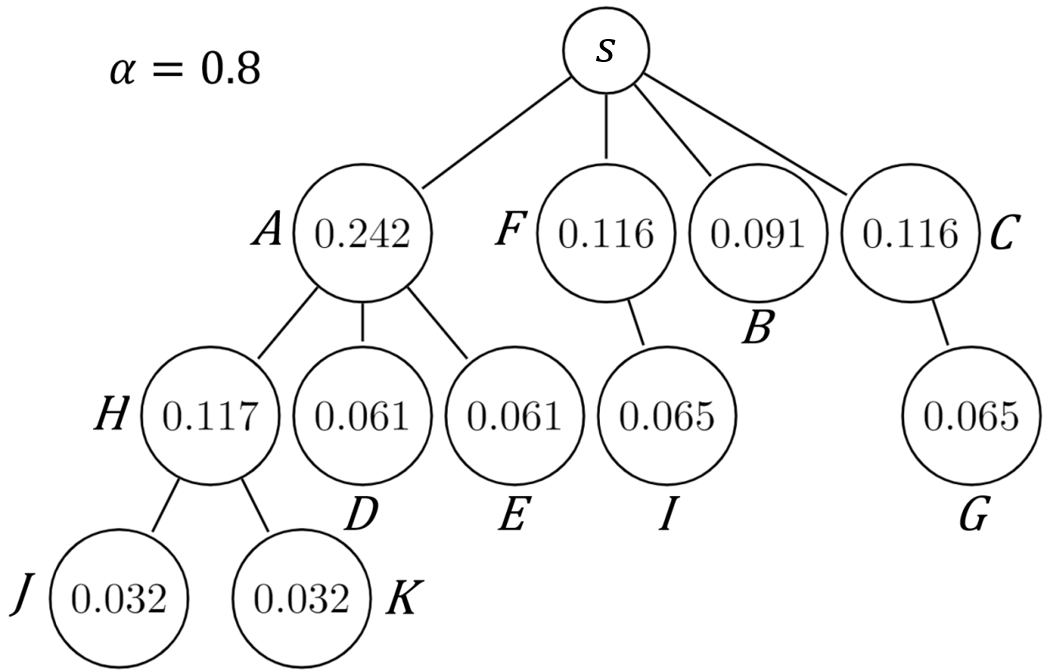}
        \caption*{(4)}
        \label{fig:5.1}
    \end{subfigure}
    	\begin{subfigure}[b]{0.33\textwidth}
        \includegraphics[width=\textwidth]{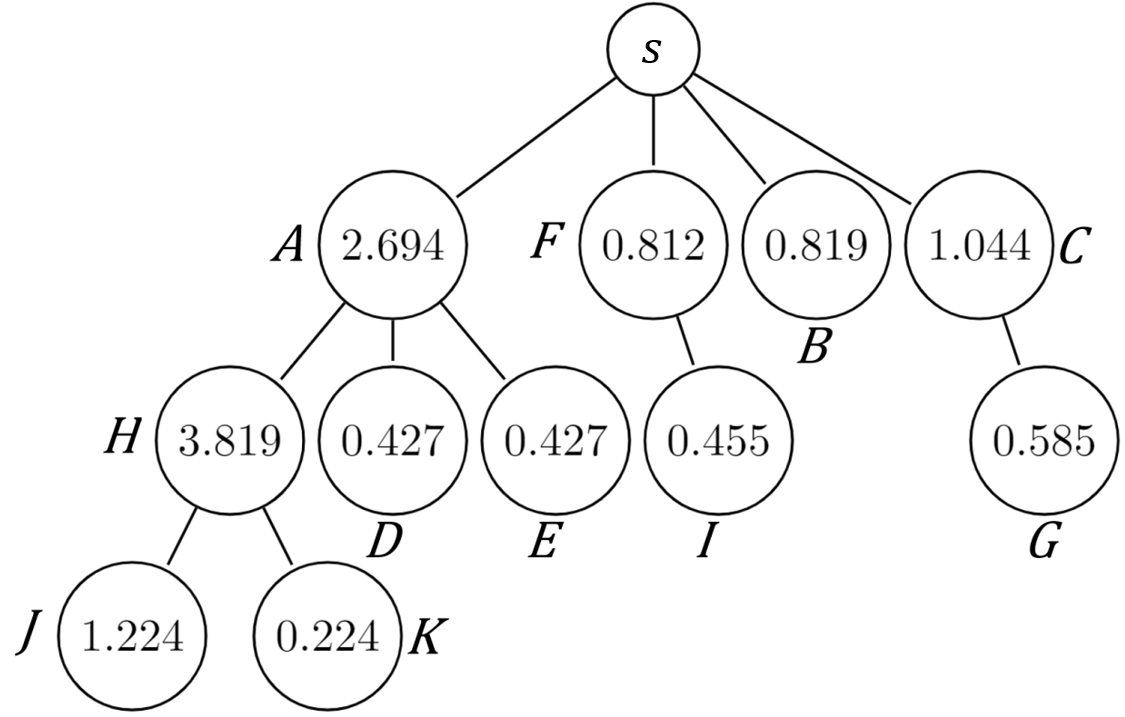}
        \caption*{(5)}
        \label{fig:5.1}
    \end{subfigure}
    	\begin{subfigure}[b]{0.33\textwidth}
        \includegraphics[width=\textwidth]{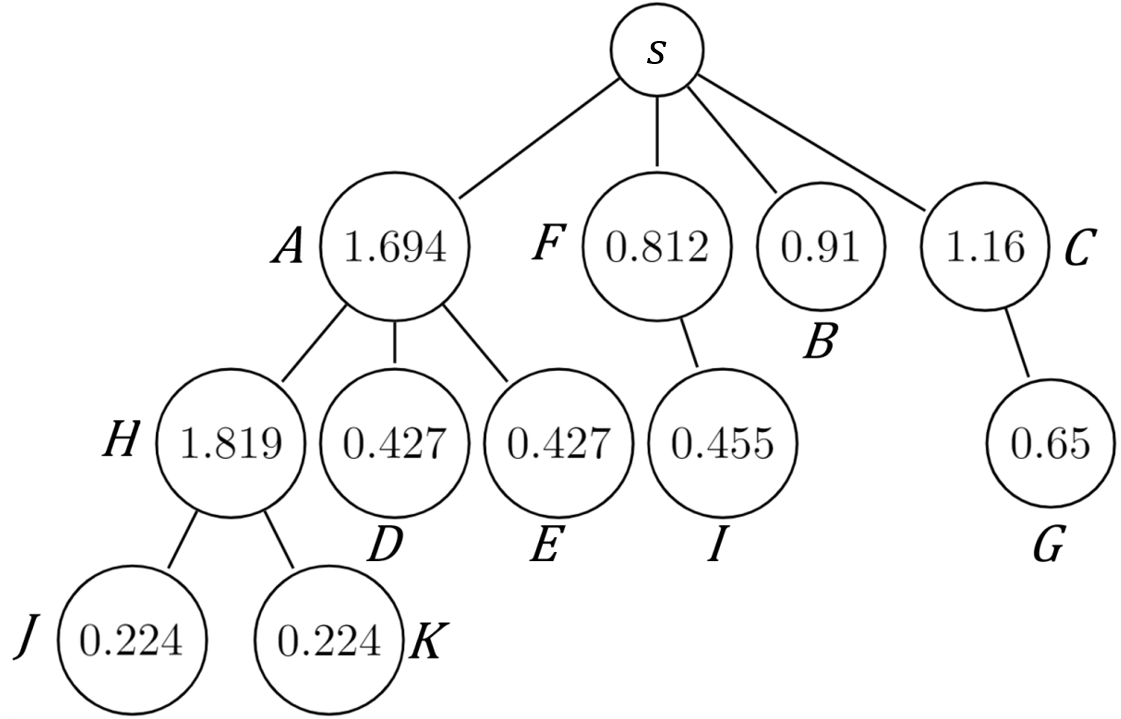}
        \caption*{(6)}
        \label{fig:5.1}
    \end{subfigure}

    \caption{Two running examples of the IDM based and TNM based instance mechanisms of NRMF. (1) shows a graph $G(\theta')$ generated from a report profile $\theta'$. (2) is the corresponding graph $G(\theta')$ under the diffusion action mechanism IDM, where the red arrows mean the payment transfers of agents. The item is allocated to agent $J$. (3) is the corresponding graph $G(\theta')$ under TNM. The red arrows also mean the payment transfers of agents. However, the item is allocated to agent $H$ which is the difference. (4) shows the coefficients $\omega_i$ of agents and we set $\alpha = 0.8$. (5) and (6) shows the final utilities corresponding to IDM mechanism input and TNM mechanism input respectively.}
    \label{fig:IDM}
\end{figure*} 

We show two running examples of the redistribution mechanism with IDM and TNM in Figure~\ref{fig:IDM}. Both IDM and TNM first find the agent $J$ who has the highest valuation. Then they check the critical path $s \rightarrow A \rightarrow H \rightarrow J$. Under IDM, the item is allocated to the agent $J$ and the sponsor's revenue is $9$. After removing the all agents in the subtree rooted by agent $A$, the revenue gained by the sponsor under IDM is $7$. Therefore, if we set $\alpha = 0.8$, the final utility of agent $J$ is $14-13 + 0.032 \times 7 = 1.224$. When it comes to TNM, the agent $H$ will win the item because she is the agent who reports the highest valuation after removing the agent $J$ and $K$ on the critical path. $H$'s critical parent $A$ will get the highest valuation $9$ after removing all $A$'s descendants and the rest $1$ of $H$'s payment will be directly given to the sponsor. Similarly, after removing the all agents in the subtree rooted by $A$, sponsor still gets $7$. The final utility of agent $H$ is $11-10 + 0.117 \times 7 = 1.819$.

\section{Discussion and Conclusion}
In this paper, we focus on redistribution mechanism design on social networks, where a sponsor wants to incentivize agents to invite their neighbours to participate in, and allocate a single item without seeking any profit. 
To achieve the goal, we propose a network-based redistribution mechanism framework (NRMF) that can construct a diffusion redistribution mechanism from any diffusion auction.
The NRMF will maintain the properties of incentive compatibility and individual rationality of the original diffusion auction, and also be non-deficit if the original diffusion auction is revenue monotonic.

Moreover,  without affecting the efficiency of the diffusion auction, NRMF can achieve the property of asymptotically budget-balanced (ABB) or $\epsilon$-asymptotically budget-balanced. We consider ABB rather than budget-balance (BB), which requires that all revenue be returned back to buyers ideally. It is common even under the traditional settings because no mechanisms can satisfy all the properties of efficiency 
, IC, IR and BB according to Green-Laffont impossibility theorem~\cite{green1979incentives}. 
When it comes to our setting, the above impossibility theorem still holds. The reason is that the special cases where all agents are directly connected to the sponsor in the network are equivalent to the cases in the traditional settings. However, in diffusion auction mechanism design, efficiency is usually abandoned because it is even impossible to design a diffusion auction mechanism that satisfies efficiency, IC, IR and non-deficit simultaneously~\cite{li2022diffusion}.

If we require the output mechanism to be IC and IR in NRMF, then the input mechanism must be IC, IR and non-deficit. The output redistribution mechanism achieves the same level of social welfare as the input mechanism. 

In theory, without efficiency, the impossibility that BB cannot be achieved is missing in our setting. Actually, it is possible to satisfy the properties of IC, IR and BB simultaneously. For example, when we input the fixed pricing diffusion auction mechanisms into our framework, the corresponding output mechanisms are always BB. It is an interesting future work that finding the relationship between the efficiency and the residual budget after the redistribution. 

\begin{acks}
This work is supported by Science and Technology Commission of Shanghai Municipality (No. 23010503000 and No. 22ZR1442200) and Shanghai Frontiers Science Center of Human-centered Artificial Intelligence (ShangHAI).
\end{acks}

\bibliographystyle{ACM-Reference-Format} 
\bibliography{FNRM}


\begin{thebibliography}{21}


\ifx \showCODEN    \undefined \def \showCODEN     #1{\unskip}     \fi
\ifx \showDOI      \undefined \def \showDOI       #1{#1}\fi
\ifx \showISBNx    \undefined \def \showISBNx     #1{\unskip}     \fi
\ifx \showISBNxiii \undefined \def \showISBNxiii  #1{\unskip}     \fi
\ifx \showISSN     \undefined \def \showISSN      #1{\unskip}     \fi
\ifx \showLCCN     \undefined \def \showLCCN      #1{\unskip}     \fi
\ifx \shownote     \undefined \def \shownote      #1{#1}          \fi
\ifx \showarticletitle \undefined \def \showarticletitle #1{#1}   \fi
\ifx \showURL      \undefined \def \showURL       {\relax}        \fi
\providecommand\bibfield[2]{#2}
\providecommand\bibinfo[2]{#2}
\providecommand\natexlab[1]{#1}
\providecommand\showeprint[2][]{arXiv:#2}

\bibitem[\protect\citeauthoryear{Cavallo}{Cavallo}{2006}]%
        {DBLP:conf/atal/Cavallo06}
\bibfield{author}{\bibinfo{person}{Ruggiero Cavallo}.}
  \bibinfo{year}{2006}\natexlab{}.
\newblock \showarticletitle{Optimal decision-making with minimal waste:
  strategyproof redistribution of {VCG} payments}. In
  \bibinfo{booktitle}{\emph{5th International Joint Conference on Autonomous
  Agents and Multiagent Systems {(AAMAS} 2006), Hakodate, Japan, May 8-12,
  2006}}. \bibinfo{publisher}{{ACM}}, \bibinfo{pages}{882--889}.
\newblock


\bibitem[\protect\citeauthoryear{Clarke}{Clarke}{1971}]%
        {clarke1971multipart}
\bibfield{author}{\bibinfo{person}{Edward~H Clarke}.}
  \bibinfo{year}{1971}\natexlab{}.
\newblock \showarticletitle{Multipart pricing of public goods}.
\newblock \bibinfo{journal}{\emph{Public choice}} (\bibinfo{year}{1971}),
  \bibinfo{pages}{17--33}.
\newblock


\bibitem[\protect\citeauthoryear{Easley and Kleinberg}{Easley and
  Kleinberg}{2010}]%
        {DBLP:books/daglib/0025903}
\bibfield{author}{\bibinfo{person}{David~A. Easley} {and}
  \bibinfo{person}{Jon~M. Kleinberg}.} \bibinfo{year}{2010}\natexlab{}.
\newblock \bibinfo{booktitle}{\emph{Networks, Crowds, and Markets - Reasoning
  About a Highly Connected World}}.
\newblock \bibinfo{publisher}{Cambridge University Press}.
\newblock


\bibitem[\protect\citeauthoryear{Green and Laffont}{Green and Laffont}{1979}]%
        {green1979incentives}
\bibfield{author}{\bibinfo{person}{Jerry Green} {and}
  \bibinfo{person}{Jean-Jacques Laffont}.} \bibinfo{year}{1979}\natexlab{}.
\newblock \bibinfo{booktitle}{\emph{Incentives in public decision-making}}.
\newblock \bibinfo{publisher}{Elsevier North-Holland}.
\newblock


\bibitem[\protect\citeauthoryear{Groves}{Groves}{1973}]%
        {groves1973incentives}
\bibfield{author}{\bibinfo{person}{Theodore Groves}.}
  \bibinfo{year}{1973}\natexlab{}.
\newblock \showarticletitle{Incentives in teams}.
\newblock \bibinfo{journal}{\emph{Econometrica: Journal of the Econometric
  Society}} (\bibinfo{year}{1973}), \bibinfo{pages}{617--631}.
\newblock


\bibitem[\protect\citeauthoryear{Guo}{Guo}{2011a}]%
        {DBLP:conf/aaai/Guo11}
\bibfield{author}{\bibinfo{person}{Mingyu Guo}.}
  \bibinfo{year}{2011}\natexlab{a}.
\newblock \showarticletitle{{VCG} Redistribution with Gross Substitutes}. In
  \bibinfo{booktitle}{\emph{Proceedings of the Twenty-Fifth {AAAI} Conference
  on Artificial Intelligence, {AAAI} 2011, San Francisco, California, USA,
  August 7-11, 2011}}, \bibfield{editor}{\bibinfo{person}{Wolfram Burgard}
  {and} \bibinfo{person}{Dan Roth}} (Eds.). \bibinfo{publisher}{{AAAI} Press}.
\newblock
\urldef\tempurl%
\url{http://www.aaai.org/ocs/index.php/AAAI/AAAI11/paper/view/3733}
\showURL{%
\tempurl}


\bibitem[\protect\citeauthoryear{Guo}{Guo}{2011b}]%
        {guo2011vcg}
\bibfield{author}{\bibinfo{person}{Mingyu Guo}.}
  \bibinfo{year}{2011}\natexlab{b}.
\newblock \showarticletitle{VCG redistribution with gross substitutes}. In
  \bibinfo{booktitle}{\emph{Proceedings of the AAAI Conference on Artificial
  Intelligence}}, Vol.~\bibinfo{volume}{25}. \bibinfo{pages}{675--680}.
\newblock


\bibitem[\protect\citeauthoryear{Guo}{Guo}{2016}]%
        {guo2016competitive}
\bibfield{author}{\bibinfo{person}{Mingyu Guo}.}
  \bibinfo{year}{2016}\natexlab{}.
\newblock \showarticletitle{Competitive VCG redistribution mechanism for public
  project problem}. In \bibinfo{booktitle}{\emph{International Conference on
  Principles and Practice of Multi-Agent Systems}}. Springer,
  \bibinfo{pages}{279--294}.
\newblock


\bibitem[\protect\citeauthoryear{Guo and Conitzer}{Guo and Conitzer}{2009}]%
        {guo2009worst}
\bibfield{author}{\bibinfo{person}{Mingyu Guo} {and} \bibinfo{person}{Vincent
  Conitzer}.} \bibinfo{year}{2009}\natexlab{}.
\newblock \showarticletitle{Worst-case optimal redistribution of VCG payments
  in multi-unit auctions}.
\newblock \bibinfo{journal}{\emph{Games and Economic Behavior}}
  \bibinfo{volume}{67}, \bibinfo{number}{1} (\bibinfo{year}{2009}),
  \bibinfo{pages}{69--98}.
\newblock


\bibitem[\protect\citeauthoryear{Jackson}{Jackson}{2010}]%
        {jackson2010social}
\bibfield{author}{\bibinfo{person}{Matthew~O Jackson}.}
  \bibinfo{year}{2010}\natexlab{}.
\newblock \bibinfo{booktitle}{\emph{Social and economic networks}}.
\newblock \bibinfo{publisher}{Princeton university press}.
\newblock


\bibitem[\protect\citeauthoryear{Kawasaki, Barrot, Takanashi, Todo, and
  Yokoo}{Kawasaki et~al\mbox{.}}{2020}]%
        {DBLP:conf/aaai/KawasakiBTTY20}
\bibfield{author}{\bibinfo{person}{Takehiro Kawasaki},
  \bibinfo{person}{Nathana{\"{e}}l Barrot}, \bibinfo{person}{Seiji Takanashi},
  \bibinfo{person}{Taiki Todo}, {and} \bibinfo{person}{Makoto Yokoo}.}
  \bibinfo{year}{2020}\natexlab{}.
\newblock \showarticletitle{Strategy-Proof and Non-Wasteful Multi-Unit Auction
  via Social Network}. In \bibinfo{booktitle}{\emph{The Thirty-Fourth {AAAI}
  Conference on Artificial Intelligence, {AAAI} 2020, The Thirty-Second
  Innovative Applications of Artificial Intelligence Conference, {IAAI} 2020,
  The Tenth {AAAI} Symposium on Educational Advances in Artificial
  Intelligence, {EAAI} 2020, New York, NY, USA, February 7-12, 2020}}.
  \bibinfo{pages}{2062--2069}.
\newblock


\bibitem[\protect\citeauthoryear{Li, Hao, Gao, and Zhao}{Li
  et~al\mbox{.}}{2022}]%
        {li2022diffusion}
\bibfield{author}{\bibinfo{person}{Bin Li}, \bibinfo{person}{Dong Hao},
  \bibinfo{person}{Hui Gao}, {and} \bibinfo{person}{Dengji Zhao}.}
  \bibinfo{year}{2022}\natexlab{}.
\newblock \showarticletitle{Diffusion auction design}.
\newblock \bibinfo{journal}{\emph{Artificial Intelligence}}
  \bibinfo{volume}{303} (\bibinfo{year}{2022}), \bibinfo{pages}{103631}.
\newblock


\bibitem[\protect\citeauthoryear{Li, Hao, Zhao, and Yokoo}{Li
  et~al\mbox{.}}{2019}]%
        {DBLP:conf/ijcai/LiHZY19}
\bibfield{author}{\bibinfo{person}{Bin Li}, \bibinfo{person}{Dong Hao},
  \bibinfo{person}{Dengji Zhao}, {and} \bibinfo{person}{Makoto Yokoo}.}
  \bibinfo{year}{2019}\natexlab{}.
\newblock \showarticletitle{Diffusion and Auction on Graphs}. In
  \bibinfo{booktitle}{\emph{Proceedings of the Twenty-Eighth International
  Joint Conference on Artificial Intelligence, {IJCAI} 2019, Macao, China,
  August 10-16, 2019}}. \bibinfo{pages}{435--441}.
\newblock


\bibitem[\protect\citeauthoryear{Li, Hao, Zhao, and Zhou}{Li
  et~al\mbox{.}}{2017}]%
        {DBLP:conf/aaai/LiHZZ17}
\bibfield{author}{\bibinfo{person}{Bin Li}, \bibinfo{person}{Dong Hao},
  \bibinfo{person}{Dengji Zhao}, {and} \bibinfo{person}{Tao Zhou}.}
  \bibinfo{year}{2017}\natexlab{}.
\newblock \showarticletitle{Mechanism Design in Social Networks}. In
  \bibinfo{booktitle}{\emph{Proceedings of the Thirty-First {AAAI} Conference
  on Artificial Intelligence, February 4-9, 2017, San Francisco, California,
  {USA}}}. \bibinfo{publisher}{{AAAI} Press}, \bibinfo{pages}{586--592}.
\newblock
\urldef\tempurl%
\url{http://aaai.org/ocs/index.php/AAAI/AAAI17/paper/view/14879}
\showURL{%
\tempurl}


\bibitem[\protect\citeauthoryear{Manisha, Jawahar, and Gujar}{Manisha
  et~al\mbox{.}}{2018}]%
        {manisha2018learning}
\bibfield{author}{\bibinfo{person}{Padala Manisha}, \bibinfo{person}{CV
  Jawahar}, {and} \bibinfo{person}{Sujit Gujar}.}
  \bibinfo{year}{2018}\natexlab{}.
\newblock \showarticletitle{Learning Optimal Redistribution Mechanisms Through
  Neural Networks}. In \bibinfo{booktitle}{\emph{Proceedings of the 17th
  International Conference on Autonomous Agents and MultiAgent Systems}}.
  \bibinfo{pages}{345--353}.
\newblock


\bibitem[\protect\citeauthoryear{Moulin}{Moulin}{2009}]%
        {moulin2009almost}
\bibfield{author}{\bibinfo{person}{Herv{\'e} Moulin}.}
  \bibinfo{year}{2009}\natexlab{}.
\newblock \showarticletitle{Almost budget-balanced VCG mechanisms to assign
  multiple objects}.
\newblock \bibinfo{journal}{\emph{Journal of Economic theory}}
  \bibinfo{volume}{144}, \bibinfo{number}{1} (\bibinfo{year}{2009}),
  \bibinfo{pages}{96--119}.
\newblock


\bibitem[\protect\citeauthoryear{Vickrey}{Vickrey}{1961}]%
        {vickrey1961counterspeculation}
\bibfield{author}{\bibinfo{person}{William Vickrey}.}
  \bibinfo{year}{1961}\natexlab{}.
\newblock \showarticletitle{Counterspeculation, auctions, and competitive
  sealed tenders}.
\newblock \bibinfo{journal}{\emph{The Journal of finance}}
  \bibinfo{volume}{16}, \bibinfo{number}{1} (\bibinfo{year}{1961}),
  \bibinfo{pages}{8--37}.
\newblock


\bibitem[\protect\citeauthoryear{Zhang, Zhao, and Zhang}{Zhang
  et~al\mbox{.}}{2020}]%
        {zhang2020incentivize}
\bibfield{author}{\bibinfo{person}{Wen Zhang}, \bibinfo{person}{Dengji Zhao},
  {and} \bibinfo{person}{Yao Zhang}.} \bibinfo{year}{2020}\natexlab{}.
\newblock \showarticletitle{Incentivize Diffusion with Fair Rewards}.
\newblock In \bibinfo{booktitle}{\emph{ECAI 2020}}. \bibinfo{publisher}{IOS
  Press}, \bibinfo{pages}{251--258}.
\newblock


\bibitem[\protect\citeauthoryear{Zhao}{Zhao}{[n.d.]}]%
        {zhao22mechanism}
\bibfield{author}{\bibinfo{person}{Dengji Zhao}.}
  \bibinfo{year}{[n.d.]}\natexlab{}.
\newblock \showarticletitle{Mechanism design powered by social interactions: a
  call to arms}. In \bibinfo{booktitle}{\emph{Proceedings of the Thirty-First
  International Joint Conference on Artificial Intelligence, IJCAI-22}}.
  \bibinfo{pages}{5831--5835}.
\newblock


\bibitem[\protect\citeauthoryear{Zhao}{Zhao}{2021}]%
        {zhao2021mechanism}
\bibfield{author}{\bibinfo{person}{Dengji Zhao}.}
  \bibinfo{year}{2021}\natexlab{}.
\newblock \showarticletitle{Mechanism Design Powered by Social Interactions}.
  In \bibinfo{booktitle}{\emph{Proceedings of the 20th International Conference
  on Autonomous Agents and MultiAgent Systems}}. \bibinfo{pages}{63--67}.
\newblock


\bibitem[\protect\citeauthoryear{Zhao, Li, Xu, Hao, and Jennings}{Zhao
  et~al\mbox{.}}{2018}]%
        {DBLP:conf/atal/ZhaoLXHJ18}
\bibfield{author}{\bibinfo{person}{Dengji Zhao}, \bibinfo{person}{Bin Li},
  \bibinfo{person}{Junping Xu}, \bibinfo{person}{Dong Hao}, {and}
  \bibinfo{person}{Nicholas~R. Jennings}.} \bibinfo{year}{2018}\natexlab{}.
\newblock \showarticletitle{Selling Multiple Items via Social Networks}. In
  \bibinfo{booktitle}{\emph{Proceedings of the 17th International Conference on
  Autonomous Agents and MultiAgent Systems, {AAMAS} 2018, Stockholm, Sweden,
  July 10-15, 2018}}. \bibinfo{pages}{68--76}.
\newblock


\end{thebibliography}


\end{document}